\DeclareMathOperator*{\argmin}{arg\,min}
\def\gyori{Gy\H{o}ri\xspace}
\def\lovasz{Lov\'{a}sz\xspace}
\spnewtheorem*{gl-theorem}{Gy\H{o}ri-Lov\'{a}sz Theorem}{\bfseries}{}%{\itshape}
\def\glp{GL-Partition\xspace}
\def\glps{GL-Partitions\xspace}
\def\bigO{\mathcal{O}}
\def\wmax{w_{\mathit{max}}}
\def\compi{\overline{I}}
\def\C{\mathcal{C}}
\def\ourclass{$\text{HHI}_4^2$-free\xspace}
\def\algChordalGL{\texttt{ChordalGL}\xspace}
\def\algChordalGLweighted{\texttt{WeightedChordalGL}\xspace}
\begin{document}
\title{Efficient Constructions for the Gy\H{o}ri-Lov\'{a}sz Theorem on Almost Chordal Graphs}
\titlerunning{Gy\H{o}ri-Lov\'{a}sz Theorem on Almost Chordal Graphs}
% If the paper title is too long for the running head, you can set
% an abbreviated paper title here
%
\author{Katrin Casel\inst{1}\orcidID{0000-0001-6146-8684} \and
Tobias Friedrich\inst{1}\orcidID{0000-0003-0076-6308} \and
Davis Issac\inst{1}\orcidID{0000-0001-5559-7471} \and
Aikaterini Niklanovits\inst{1}\orcidID{0000-0002-4911-4493} \and
Ziena Zeif\inst{1}\orcidID{0000-0003-0378-1458}}
\authorrunning{K. Casel et al.}
% First names are abbreviated in the running head.
% If there are more than two authors, 'et al.' is used.
%
\institute{Hasso Plattner Institute, University of Potsdam, Potsdam 14482, Germany\\
	\email{\{First name.Last name\}@hpi.de}} 
%%%%%%% WHERE TO PLACE EMAILS?? %%%%
%\and
%Springer Heidelberg, Tiergartenstr. 17, 69121 Heidelberg, Germany
%\email{lncs@springer.com}\\
%\url{http://www.springer.com/gp/computer-science/lncs} \and
%ABC Institute, Rupert-Karls-University Heidelberg, Heidelberg, Germany\\
%\email{\{abc,lncs\}@uni-heidelberg.de}}
%
\maketitle              % typeset the header of the contribution
\begin{abstract}
In the 1970s, \gyori and \lovasz showed that for a $k$-connected $n$-vertex graph, a given set of terminal vertices $t_1, \dots, t_k$ and natural numbers $n_1, \dots, n_k$ satisfying $\sum_{i=1}^{k} n_i = n$, a connected vertex partition $S_1, \dots, S_k$ satisfying $t_i \in S_i$ and $|S_i| = n_i$ exists.
However, polynomial algorithms to actually compute such partitions are known so far only for $k \leq 4$. This motivates us to take a new approach and constrain this problem to particular graph classes instead of restricting the values of $k$.
More precisely, we consider $k$-connected chordal graphs and a broader class of graphs related to them.  
For the first class, we give an algorithm with $\bigO(n^2)$ running time that solves the problem exactly, and for the second, an algorithm with $\bigO(n^4)$ running time that deviates on at most one vertex from the required vertex partition sizes.

\keywords{Gy\H{o}ri-Lov\'{a}sz theorem \and chordal graphs \and HHD-free graphs.}% maybe use graph partition instead of HHD-free?
\end{abstract}
\section{Introduction}
%!TEX root = WG.tex

Partitioning a graph into connected subgraphs is a fundamental task in graph algorithms. Such \emph{connected} partitions occur as desirable structures in many application areas such as  image processing~\cite{lucertini1993most}, road network decomposition~\cite{mohring2007partitioning}, and robotics \cite{zhou2019balanced}.
%An important constraint of the partitioning required in these applications is for each part of a partition to be connected and  of a certain size.

From a theoretical point of view, the existence of a partition into connected components with certain properties also gives insights into the graph structure.
 In theory as well as in many applications, one is interested in a connected partition that has a given number of subgraphs of chosen respective sizes. With the simple example of a star-graph, it is observed that not every graph admits a connected partition for any such choice of subgraph sizes. More generally speaking, if there exists a small set of $t$ vertices whose removal disconnects a graph (\emph{separator}), then any connected partition into $k>t$ subgraphs has limited choice of subgraph sizes. Graphs that do not contain  such a separator of size less than $k$ are called $k$-\emph{connected}.

On the other hand, \gyori and \lovasz independently showed that $k$-connectivity is not just necessary but also sufficient to enable a connected partitioning into $k$ subgraphs of required sizes, formally stated by the following result.
	
\begin{gl-theorem}[\cite{gyori1976division},\cite{lovasz1977homology}]
	Let $k\geq 2$ be an integer, $G=(V,E)$ a $k$-connected graph, \ $t_1,\ldots, t_k\in V$ distinct vertices and $n_1,\ldots,n_k\in \mathbb N$ such that $\sum_{i=1}^kn_i= |V|$. Then $G$ has disjoint connected subgraphs $G_1, \ldots G_k$ such that $|V(G_i)|=n_i$  and $t_i\in V(G_i)$ for all $i\in [k]$.
\end{gl-theorem}

The caveat of this famous result is that the constructive proof of it yields an exponential time algorithm. Despite this result being known since 1976, to this day we only know polynomial constructions for restricted values of~$k$. Specifically, in 1990 Suzuki et al.~\cite{suzuki1990linear} provided such an algorithm for $k=2$ and also for $k=3$~\cite{suzuki1990algorithm}. Moreover in 1994 Wada et al.~\cite{wada1993efficient} also provided an extended result for $k=3$.
For the case of $k=4$ Nakano et al.~\cite{nakano1997linear} gave a linear time algorithm for the case where $k=4$, $G$~is planar and the given terminals are located on the same face of a plane embedding of~$G$, while in 2016 Hoyer and Thomas~\cite{hoyer2019independent} provided a polynomial time algorithm for the case of $k=4$. And so far, this is where the list ends, thus for $k\geq 5$ it remains open whether there even exists a polynomial time construction.

Towards a construction for general $k$, we consider restricting the class of $k$-connected graphs instead of the values of $k$. 
More precisely, we consider (generalizations of) \emph{chordal} $k$-connected graphs.
A graph is called \emph{chordal}, if it does not contain an induced cycle of length more than three. 
The restriction to chordal graphs is known to often yield tractability for otherwise NP-hard problems, for example chromatic number, clique number, independence number, clique covering number, stable set and treewidth decomposition \cite{rose1976algorithmic}. %\cite{golumbic2004algorithmic}
Apart from the interest chordal graphs have from a graph theoretic point of view, their structural properties have also been proven useful in biology when it comes to studying multidomain proteins and network motifs (see e.g.~\cite{przytycka2005graph,przytycka2006important}).

%The property that makes chordality a useful restriction in the context of connected partitions, is the restrictions it puts on the structure of minimal separators. The characterization of chordal graphs through the property of their minimal separators being cliques was first given by Dirac \cite{dirac1961rigid} and later by Lekerkerker and Boland~\cite{lekkeikerker1962representation}. 
%For the context of connected partitions, the characterization by Fulkerson and Gross in \cite{fulkerson1965incidence} through perfect elimination orderings will turn out to be the most useful one.
%It has also been proven useful to developing efficient algorithms for chordal graphs.
%A vertex is called \emph{simplicial} if the graph induced by its neighbourhood is a clique. An ordering of the vertices of a graph is called \emph{perfect elimination ordering} if a vertex $v$ is simplicial when considering the graph induced by $v$ and the vertices of higher value than $v$. 
\paragraph*{Our contribution}
To the best of our knowledge, this paper is the first to pursue the route of restricting the \gyori-\lovasz Theorem to special graph classes in order to develop a polynomial construction for general values of $k$ on a non-trivial subclass of $k$-connected graphs.
We believe that in general considering the structure of the minimal separators of a graph is promising when it comes to developing efficient algorithms for  the \gyori-\lovasz Theorem.

We give a constructive version of the \gyori-\lovasz Theorem for \emph{chordal} $k$-connected graphs with a running time in $\bigO(|V|^2)$. Observe here that this construction works for all values of $k$.
Then we show how this result can be generalized in two directions.

First, we generalize our result to the vertex weighted version of the \gyori-\lovasz Theorem (as proven independently by Chandran et al.~\cite{chandran2018spanning}, Chen et al.~\cite{chen2007almost} and Hoyer~\cite{hoyer2019independent}), specifically deriving the following theorem.
\begin{theorem}	\label{thm::GL-theorem-chordal-weighted}
	Let $k\geq 2$ be an integer, $G=(V,E,w)$ a vertex-weighted $k$-connected chordal graph with $w \colon V \to \mathbb{N}$, $t_1,\ldots, t_k \in V$ distinct vertices, and $w_1,\ldots,w_k \in\mathbb N$ with $w_i \geq w(t_i)$ and $\sum_{i=1}^k w_i = w(V)$ for all $i \in [k]$.
A partition $S_1,\dots,S_k$ of $V$, such that  $G[S_i]$ is connected, $t_i \in S_i$ and $w_i - \wmax < w(S_i) < w_i + \wmax$, for all $i \in [k]$, can be computed in time $\bigO(|V|^2)$.
\end{theorem}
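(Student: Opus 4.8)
The plan is to run a weighted variant \algChordalGLweighted of the unweighted chordal construction \algChordalGL, which peels off the parts $S_1,\dots,S_k$ one at a time: each round removes from the current graph one connected set carrying exactly one still‑unassigned terminal, and recurses on the rest with one fewer terminal. We keep a current chordal graph $H$ that is $k'$‑connected for a parameter $k'$ decreasing from $k$ to $1$ (initially $H=G$), together with the unassigned terminals and their \emph{original} target weights, and we track the \emph{defect} $d:=\bigl(\sum_{\text{unassigned }i}w_i\bigr)-w(V(H))$, which is $0$ at the start because $\sum_i w_i=w(V)$. The target invariant is $0\le d<\wmax$; when a single terminal is left we give it all of $V(H)$, whose weight is then $w_i-d$ with $d<\wmax$, so it lies in $(w_i-\wmax,\,w_i+\wmax)$.

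\medskip
\noindent\emph{The structural engine, inherited from \algChordalGL.}
All connectivity reasoning happens on the clique tree of $H$, whose bags are the maximal cliques and whose adhesion sets are exactly the minimal separators of $H$; by $k'$‑connectivity each such separator has at least $k'$ vertices. If $H$ is a single clique the statement is immediate; otherwise \algChordalGL uses this structure to pick one unassigned terminal $t_j$ and a connected, terminal‑avoiding set $S_j\ni t_j$ whose deletion leaves a chordal, $(k'-1)$‑connected graph still containing the other terminals. In the cleanest case $S_j$ is an entire pendant region $R$ of the clique tree, and the reason deleting it costs at most one unit of connectivity is structural: its neighbourhood $S=N(R)$ is a clique of size $\ge k'$, so any would‑be separator $C$ of $H-R$ with $|C|<k'-1$ misses some vertex of $S$, forcing one of the sides of $C$ to avoid $R$ entirely and hence to be cut off from the rest already in $H$ by $C$ — impossible since $|C|<k'$. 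None of this sees the vertex weights, so the whole argument transfers from \algChordalGL verbatim.

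\medskip
\noindent\emph{The one new ingredient: a weight-aware stopping rule.}
Weights enter only in deciding how far to grow $S_j$: the rule ``stop when $|S_j|=n_j$'' becomes ``stop at the first vertex for which $w(S_j)\ge w_j-d$'', vertices being added one at a time in whatever order the unweighted growth prescribes, so the connectivity bookkeeping above is untouched. As each vertex has weight at most $\wmax$, this gives $w(S_j)\in[\,w_j-d,\ w_j-d+\wmax)\subseteq(w_j-\wmax,\ w_j+\wmax)$ using $0\le d<\wmax$, and the updated defect $d+\bigl(w(S_j)-w_j\bigr)$ again lies in $[0,\wmax)$, so the invariant is preserved. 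If $w_j-d\le w(t_j)$ the singleton $S_j=\{t_j\}$ already qualifies and still has weight $w(t_j)\le w_j<w_j-d+\wmax$; and $w_j-d\le w(V(H))$ follows from $w_j\le\sum_{\text{unassigned}}w_i=w(V(H))+d$, so the threshold is always reachable. (Note the obvious shortcut of replacing each $v$ by a clique of $w(v)$ vertices and calling \algChordalGL is not polynomial, since the weights are arbitrary naturals and the blow‑up has $w(V)$ vertices; this is why we manipulate the weights directly.) The extra bookkeeping is $\bigO(1)$ per vertex, so the running time is dominated by \algChordalGL and remains $\bigO(|V|^2)$.

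\medskip
\noindent\emph{The main obstacle.}
The genuinely hard part lives entirely inside \algChordalGL: one must show that, from the chosen terminal, a connected terminal‑avoiding set whose removal preserves $(k'-1)$‑connectivity can be grown to \emph{any} weight up to almost $w(V(H))$ — in particular up to $w_j-d$ — and when the pendant region is too light this forces the growth to erode the separator $S$ (and spill past it), with the proof that connectivity still drops by only one becoming delicate. We invoke that analysis as a black box. The reassuring point is that the passage to weights creates no extra difficulty here: the admissible window for $w(S_j)$ has width $\wmax$, i.e.\ at least the weight of one vertex, so the first vertex that carries $w(S_j)$ past the threshold is guaranteed to land inside the window; we never need the exact‑size control that the unweighted routine already supplies.
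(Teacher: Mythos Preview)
Your proposal rests on a mistaken picture of what \algChordalGL does, and this creates a genuine gap.

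\algChordalGL is \emph{not} a one-at-a-time peeling procedure based on clique trees or pendant regions. It grows \emph{all} sets $S_1,\dots,S_k$ simultaneously: starting from $S_i=\{t_i\}$, each iteration takes the unassigned vertex of minimum p.e.o.\ value in the neighbourhood of the currently non-full sets and adds it to one of them (the one with smallest priority, i.e.\ smallest maximum p.e.o.\ value). The correctness hinges on \Cref{lemma::neighborhood_non_empty}, which shows via a p.e.o.\ argument (not clique-tree regions) that every non-full set always has an unassigned neighbour; there is no step where a set $S_j$ is removed from $H$ and the algorithm recurses on a $(k'-1)$-connected remainder. Consequently the ``analysis'' you invoke as a black box---that a terminal-avoiding connected set can be grown to any prescribed weight while its deletion drops connectivity by at most one, even when it spills past the separator of a pendant region---simply does not exist in \algChordalGL. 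You have identified the hard part of your own approach and then delegated it to a routine that does not solve it.

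The paper's weighted version keeps the simultaneous-growth scheme unchanged and only modifies the rule for declaring a set full: $S_{j'}$ becomes full the moment adding the next vertex $v'$ would push $w(S_{j'})$ to at least $w_{j'}$, and whether $v'$ is actually added depends on the sign of the running balance $\sum_{i\in\compi}(w_i-w(S_i))$. The invariant maintained is $\bigl|\sum_{i\in\compi}(w_i-w(S_i))\bigr|<\wmax$ (\Cref{lemma::invariant-weights}), which is close in spirit to your defect $d\in[0,\wmax)$ but lives on the collection of \emph{already-full} sets rather than on a shrinking residual graph. Because the growth order and the full/non-full bookkeeping follow the same p.e.o.\ priorities as in the unweighted case, \Cref{corollary::never_iso} applies verbatim and no new connectivity argument is needed---which is exactly the step your sequential scheme cannot sidestep.
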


%First, we state below our constructive version of a generalized version of the \gyori-\lovasz Theorem for weighted graphs, as proven independently by Chandran et al.~\cite{chandran2018spanning}, Chen et al.~\cite{chen2007almost} and Hoyer~\cite{hoyer2019independent}.
%\begin{theorem}
%	Let $k\geq 2$ be an integer, $G$ a vertex weighted $k$-connected graph, $k$ distinct terminal vertices $t_1,\ldots,t_k$, and $w_1,\ldots, w_k\in\mathbb N$ such that $\sum_{i=1}^kw_i=w(G)$. Then $G$ has disjoint connected subgraphs $G_1,G_2,\ldots, G_k$ such that for $i\in[k]$, $w_i-w_{max}\leq w(G_i)\leq w_i+w_{max}$ and $t_i\in G_i$, where $w_{max}$ is the largest vertex weight.	
%\end{theorem}
We further use this weighted version to derive an approximate version of the \gyori-\lovasz Theorem for a larger graph class.
Specifically we define $\text{I}_j^i$ to contain all graphs that occur from two distinct chordless $C_j$'s that have at least $i$ vertices in common. We focus on $\text{I}_4^2$-free combined with HH-free graphs.
More specifically, we consider the subclass of $k$-connected graphs that contain no hole or house as subgraph (see preliminaries for the definitions of structures such as hole, house etc.) and that does not contain two distinct induced $C_4$ that share more than one vertex.
We call this class of graphs \ourclass.
Note that \ourclass, apart from being a strict superclass of chordal graphs, is also a subclass of HHD-free graphs (that is house, hole, domino-free graphs), a graph class studied and being used in a similar manner as chordal graphs as it is also a class where the minimum fill-in set is proven to be polynomially time solvable~\cite{DBLP:conf/wg/BroersmaDK97} (see also \cite{jamison1988semi} for NP-hard problems solved in polynomial time on HHD-free graphs).
Taking advantage of the fact that given an \ourclass  graph, the subgraph formed by its induced $C_4$ has a treelike structure, we are able to derive the following result.
\begin{theorem}\label{thm:ourclass-GL-partition}
	Let $k\geq 2$ be an integer, $G=(V,E,w)$ a vertex-weighted $k$-connected \ourclass graph with $w \colon V \to \mathbb{N}$, $t_1,\ldots, t_k \in V$ distinct vertices, and $w_1,\ldots,w_k \in\mathbb N$ with $w_i \geq w(t_i)$ and $\sum_{i=1}^k w_i = w(V)$ for all $i \in [k]$.
A partition $S_1,\dots,S_k$ of $V$, such that  $G[S_i]$ is connected, $t_i \in S_i$ and $w_i - 2\wmax < w(S_i) < w_i + 2\wmax$, for all $i \in [k]$, can be computed in time $\bigO(|V|^4)$.
\end{theorem}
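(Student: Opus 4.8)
The plan is to reduce the \ourclass case to the chordal case already handled in \Cref{thm::GL-theorem-chordal-weighted} by contracting, inside every induced $C_4$ of $G$, exactly one edge. First I would compute the set $\mathcal{Q}$ of all induced $C_4$'s of $G$; since $G$ is \ourclass, any two of them share at most one vertex, so $|\mathcal{Q}| = \bigO(|V|^2)$ and a brute-force enumeration takes $\bigO(|V|^4)$ time, which will be the dominant cost. Using the tree-like structure of the induced $C_4$'s noted earlier, I would order $\mathcal{Q}$ so that each $Q$ shares at most one vertex $r_Q$ with the union of its predecessors, and select one side $e_Q$ of the $4$-cycle $Q$ that is not incident to $r_Q$ (an arbitrary side when there is no such $r_Q$). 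The bookkeeping claim is that $M := \{e_Q : Q \in \mathcal{Q}\}$ is a matching of $G$: if two selected sides shared a vertex $v$, then $v$ would lie in two $C_4$'s and hence be the attachment vertex of the later of the two, which its selected side avoids by construction, a contradiction. Contracting the matching $M$ produces a graph $G' := G/M$ in which every contracted vertex is the union of exactly two vertices of $G$; consequently $\wmax(G') \le 2\wmax$, and this is the origin of the factor $2$ in the statement.

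The heart of the argument is to show that $G'$ is chordal and still $k$-connected. For chordality, contracting one side of a $4$-cycle turns it into a triangle, so no member of $\mathcal{Q}$ survives in $G'$; and if $G'$ contained an induced cycle of length at least $4$, then, choosing an appropriate representative for each contracted vertex on it (a single original vertex, or the contracted edge itself when the cycle enters and leaves the contracted vertex at different endpoints), one would lift it to a configuration in $G$ that is a hole, a house, a pair of distinct induced $C_4$'s sharing an edge, or an induced $C_4$ none of whose sides lies in $M$; the first three are impossible because $G$ is \ourclass, and the last is impossible since $M$ contains a side of every induced $C_4$. For $k$-connectivity, the key sublemma is that every induced $C_4$ of a \ourclass graph has a side whose two endpoints lie in no common minimum separator of $G$: assuming such a separator $S$ exists, every vertex of $S$ has a neighbour in each component of $G - S$, and tracing a shortest path inside one component between a neighbour of one endpoint and a neighbour of the other yields, according to its length and the local adjacencies, a hole, a house, or two induced $C_4$'s sharing more than one vertex. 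Choosing each $e_Q$ among such safe sides (compatibly with the matching condition) and using that contracting an edge whose endpoints share no minimum separator preserves $k$-connectivity, we get that $G'$ is $k$-connected.

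Finally I would run \algChordalGLweighted on $G'$ with the weights obtained by summing along the edges of $M$, with terminals the images $t'_1, \dots, t'_k$ of $t_1, \dots, t_k$, and with the same targets $w_1, \dots, w_k$, in time $\bigO(|V(G')|^2) = \bigO(|V|^2)$. Expanding each contracted vertex back into its edge inside the returned sets yields a partition $S_1, \dots, S_k$ of $V$ with $t_i \in S_i$ and $G[S_i]$ connected (a single edge, and hence every expanded set, stays connected), satisfying $w_i - 2\wmax < w_i - \wmax(G') < w(S_i) < w_i + \wmax(G') < w_i + 2\wmax$. One subtlety is the hypothesis $w_i \ge w'(t'_i)$ required by \Cref{thm::GL-theorem-chordal-weighted}, which can fail when a terminal gets merged with a neighbour; I would handle it by preferring, among the admissible sides of each $C_4$, those avoiding its terminals, and otherwise momentarily raising the offending target to $w'(t'_i)$ and removing the excess (at most $\wmax$ per merged terminal) from the slack of the remaining targets. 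The step I expect to be the real obstacle is the joint proof that $G'$ is chordal and $k$-connected: it is precisely here that all three forbidden substructures defining \ourclass are used, and both parts rest on a somewhat delicate lifting of a bad configuration from $G'$ back to $G$.
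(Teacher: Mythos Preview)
Your high-level reduction is the same as the paper's: exploit the tree structure of the induced $C_4$'s to select one edge per $C_4$ so that the selected edges form a matching, contract that matching to obtain a $k$-connected chordal graph with maximum weight at most $2\wmax$, apply \Cref{thm::GL-theorem-chordal-weighted}, and expand back. The place where your proposal genuinely diverges from the paper, and where it currently has a gap, is the connectivity step.

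Your proposed sublemma is that \emph{some} side of each induced $C_4$ has its two endpoints in no common minimum separator, and you then want to pick $e_Q$ among such ``safe'' sides \emph{and} among the two sides avoiding the attachment vertex $r_Q$. You never argue that these two constraints are simultaneously satisfiable; a priori the only safe sides could both be incident to $r_Q$. Moreover, even if every $e_Q$ is safe in $G$, this does not by itself show that $G/M$ is $k$-connected: you would need to iterate, i.e.\ after contracting one $e_Q$ argue that the remaining selected edges are still edges of induced $C_4$'s and still safe in the contracted graph. That requires knowing that the contracted graph is again \ourclass, which you do not establish. The paper closes both holes by proving two stronger facts: first, that no \emph{three} vertices of an induced $C_4$ lie in a common minimal separator (\Cref{lemma:3-vertex-separator}), from which it follows that contracting \emph{any} edge of an induced $C_4$ preserves $k$-connectivity (\Cref{lemma:contraction-connectivity}); second, that the class is closed under contraction of a $C_4$-edge and no new induced $C_4$ is created (\Cref{lemma:closed-under-contraction}). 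With these, the choice of $e_Q$ is unconstrained by connectivity, and the argument iterates cleanly. Your proposed proof of the safe-side sublemma (``tracing a shortest path inside one component \ldots'') is in any case much closer in spirit to the proof of \Cref{lemma:contraction-connectivity} than to a statement about single sides, so you would likely end up proving the stronger lemma anyway.

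For the terminals, the paper's device is simpler than your target adjustment: before anything else, add a chord between every pair of non-adjacent terminals that share an induced $C_4$ (by \Cref{lemma:add-chord-to-cycle} this keeps the graph \ourclass and only destroys $C_4$'s). After this every surviving induced $C_4$ contains at most two terminals, and they are adjacent; hence among the three vertices of a leaf-$C_4$ not shared with any other $C_4$ there is always a non-terminal, and one can choose the contraction edge so that no two terminals are merged. Since the added chords are between terminals, they never lie inside a single $S_i$ and can be discarded at the end.
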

Notice that the above theorem implies a polynomial time algorithm with an additive error of $1$ for the unweighted case.

\section{Preliminaries}
%!TEX root = WG.tex

All graphs mentioned in this paper are undirected, finite and simple.
Given a graph $G$ and a vertex $v\in V(G)$ we denote its \emph{open neighborhood} by $N_G(v):=\{u\in V(G) \mid uv\in E(G)\}$ and by $N_G[v]$ its \emph{closed neighborhood}, which is $N(v)\cup \{v\}$. 
Similarly we denote by $N_G(S):=\bigcup_{v\in S}N_G(v)\setminus S$ the open neighborhood of a vertex set $S\subseteq V(G)$ and by $N_G[S]:=N_G(S)\cup S$ its closed neighborhood.
We omit the subscript $G$ when the graph we refer to is clear from the context. A vertex $v\in V(G)$ is \emph{universal to a vertex set $S \subset V(G)$} if $S\subseteq N(v)$. 
Let $G$ be a graph and $S\subseteq V(G)$. The \emph{induced subgraph} from $S$, denoted by $G[S]$, is the graph with vertex set $S$ and all edges of $E(G)$ with both endpoints in $S$.
%A graph $C$ is called a \emph{cycle} if $V(C)=\{v_1,\ldots,v_l\}$ and $E(C)=\{v_1v_l, v_iv_{i+1}|\text{ for } i\in[l-1]\}$, where $l\geq 3$ is the length of the cycle, and is denoted by $C_l$. 

A graph $G$ is \emph{chordal} if any cycle of $G$ of size at least $4$ has a chord (i.e., an edge linking two non-consecutive vertices of the cycle).
A vertex $v\in V(G)$ is called \emph{simplicial} if $N[v]$ induces a clique.
Based on the existence of simplicial vertices in chordal graphs, the following notion of vertex ordering was given.
Given a graph $G$, an ordering of its vertices $(v_1,\ldots,v_n)$ is called \emph{perfect elimination ordering} (p.e.o.) if $v_i$ is  simplicial in $G[\{v_i,v_{i+1},\ldots, v_n\}]$ for all $i\in[n]$. Given such an ordering $\sigma: V(G)\rightarrow\{1,\ldots,n\}$ and a vertex $v\in V(G)$ we call $\sigma(v)$ the \emph{p.e.o.~value of $v$}. 
%The existence of a p.e.o.~is a sufficient and necessary condition for a graph being chordal as it was proved by Fulkerson and Gross \cite{fulkerson1965incidence}.
Rose et al.~\cite{rose1976algorithmic} proved that a p.e.o.~of any chordal graph can be computed in linear time.
%, through reversing the order that occurs after applying a \emph{lexicographic BFS} (LBFS) on $G$.

Let $e=\{u,v\}$ be an edge of $G$. 
We denote by $G/e$ the graph $G'$, that occurs from $G$ by the contraction of $e$, that is, by removing $u$ and $v$ from $G$ and replacing it by a new vertex $z$ whose neighborhood is $\left(N(u)\cup N(v)\right)\setminus\{u,v\}$. 
%We say that a graph $H$ is a \emph{minor} of a graph $G$ if it can be obtained from $G$ after a series of vertex and edge deletions as well as edge contractions.

A graph $G$ is \emph{connected} if there exists a path between any pair of distinct vertices.
Moreover, a graph is \emph{$k$-connected} for some $k\in\mathbb N$ if after the removal of any set of at most $k-1$ distinct vertices $G$ remains connected.
Given a graph $G$ and a vertex set $S \subseteq V(G)$, we say that $S$ is a \emph{separator} of $G$ if its removal disconnects $G$.
We call $S$ a \emph{minimal separator} of $G$ if the removal of any subset $S' \subseteq V(G)$ with $|S'|<|S|$ results in a connected graph.

We now define some useful subgraphs, see also Figure~\ref{fig:: subgraphs} for illustrations.
An induced chordless cycle of length at least $5$ is called a \emph{hole}.
%Moreover, 
The graph that occurs from an induced chordless $C_4$ where exactly two of its adjacent vertices have a common neighbor is called a \emph{house}.
When referring to just the induced $C_3$ that is part of a house we call it \emph{roof} while the induced $C_4$ is called \emph{body}.
%since as we already mentioned \ourclass is a subclass of HHD-free graphs.
Two induced $C_4$ sharing exactly one edge form a \emph{domino}.
A graph that contains no hole, house or domino as an induced subgraph is called HHD-free.
%Another structure that does not participate in the definition of the classes we work on this paper but comes in handy on the proof of necessary lemmata for the approximation of GL theorem on \ourclass graphs is double house.
We call a graph that consists of two $C_4$ sharing a vertex, and an edge that connects the two neighbors of the common vertex in a way that no other $C_4$ exists a \emph{double house}.

Lastly, let $G=(V,E)$ be a $k$-connected graph, let $t_1,\ldots, t_k\in V$ be $k$ distinct vertices, and let $n_1,\ldots, n_k$ be natural numbers satisfying $\sum_{i=1}^kn_i= |V|$.
We call $S_1,\ldots S_k \subseteq V(G)$ a \emph{GL-Partition of $G$} if $S_1,\ldots S_k$ forms a partition of $V(G)$, such that for all $i \in [k]$ we have that $G[S_i]$ is connected, $t_i \in S_i$ and $|S_i|=n_i$.
When there exists an $l\in\mathbb N$, such that for such a partition only $n_i-l\leq|S_i|\leq n_i+l$ holds instead of $|S_i|=n_i$, we say that $S_1,\ldots,S_k$ is a \emph{GL-Partition of $G$ with deviation $l$}.

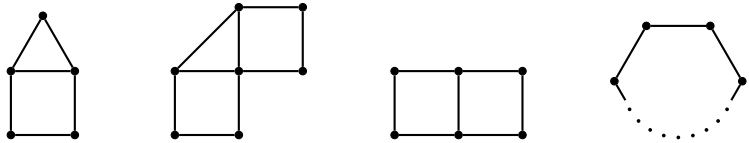
\begin{figure}
	\centering
	\begin{tikzpicture}[%
		scale=0.85,
		vertex/.style={draw,fill,circle,inner sep=1pt},%
		edge/.style={draw,thick}]
		\node[vertex] (v00) {};
		\node[vertex] (v10) at ($(v00)+(1,0)$) {};
		\node[vertex] (v01) at ($(v00)+(0,1)$) {};
		\node[vertex] (v11) at ($(v10)+(0,1)$) {};
		\node[vertex] (vtop) at ($(v11)+(120:1)$) {};
		\foreach \u/\v in {v00/v10,v10/v11,v11/v01,v01/v00,v11/vtop,vtop/v01}
		\draw[edge] (\u)--(\v);
	\end{tikzpicture} \hspace{1cm}
	\begin{tikzpicture}[%
		scale=0.85,
		vertex/.style={draw,fill,circle,inner sep=1pt},%
		edge/.style={draw,thick}]
		\node[vertex] (v00) {};
		\node[vertex] (v10) at ($(v00)+(1,0)$) {};
		\node[vertex] (v01) at ($(v00)+(0,1)$) {};
		\node[vertex] (v11) at ($(v10)+(0,1)$) {};
		\node[vertex] (v21) at ($(v11)+(1,0)$) {};
		\node[vertex] (v12) at ($(v11)+(0,1)$) {};
		\node[vertex] (v22) at ($(v21)+(0,1)$) {};
		\foreach \u/\v in {v00/v10,v10/v11,v11/v01,v01/v00,v11/v21,v21/v22,v22/v12,v12/v11,v12/v01}
		\draw[edge] (\u)--(\v);
	\end{tikzpicture}\hspace{1cm}
	\begin{tikzpicture}[%
		scale=0.85,
		vertex/.style={draw,fill,circle,inner sep=1pt},%
		edge/.style={draw,thick}]
		\node[vertex] (v00) {};
		\node[vertex] (v10) at ($(v00)+(1,0)$) {};
		\node[vertex] (v01) at ($(v00)+(0,1)$) {};
		\node[vertex] (v11) at ($(v10)+(0,1)$) {};
		\node[vertex] (v20) at ($(v10)+(1,0)$) {};
		\node[vertex] (v21) at ($(v11)+(1,0)$) {};
		\foreach \u/\v in {v00/v10,v10/v11,v11/v01,v01/v00,v10/v20,v20/v21,v21/v11}
		\draw[edge] (\u)--(\v);
	\end{tikzpicture}\hspace{1cm}
	\begin{tikzpicture}[%
		scale=0.85,
		vertex/.style={draw,fill,circle,inner sep=1pt},%
		edge/.style={draw,thick},%
		dot/.style={draw,fill,circle,inner sep=0.1pt}]
		\coordinate (O);
		\foreach \i in {0,1,2,3,4,5}
		\coordinate (C\i) at ($(O)+(360*\i/6:1)$);
		\foreach \i in {14,...,22}
		\node[dot] at ($(O)+(360*\i/24:0.88)$) {};
		\node[vertex] (v0) at (C0) {};
		\node[vertex] (v1) at (C1) {};
		\node[vertex] (v2) at (C2) {};
		\node[vertex] (v3) at (C3) {};
		\node (v35) at ($0.5*(C3)+0.5*(C4)$) {};
		\node (v55) at ($0.5*(C5)+0.5*(C0)$) {};
		\foreach \u/\v in {v0/v1,v1/v2,v2/v3,v3/v35,v55/v0}
		\draw[edge] (\u)--(\v);
	\end{tikzpicture}
	\caption{Specific subgraphs used throughout the paper, from left to right: house, double house, domino and hole example}\label{fig:: subgraphs}
\end{figure}

\section{\glp for Chordal Graphs}
%!TEX root = WG.tex

%This motivates us to consider the problem in special graph classes.
We present a simple implementable algorithm with quadratic running time that computes \glps in chordal graphs. We then show that a slight modification of our algorithm is sufficient to compute a \glp on a vertex weighted graph,  thus proving \Cref{thm::GL-theorem-chordal-weighted}.

Due to space restrictions the proof of Lemma \ref{lemma::ordering_induced_path} has been moved to Appedix~\ref{appendix2}, and the proof of correctness of Algorithm~\ref{alg::chordal-weighted} to Appendix~\ref{appendix1}.

\subsection{\glp for Unweighted Chordal Graphs}
For simplicity, we first prove the restricted version of \Cref{thm::GL-theorem-chordal-weighted} to unweighted graphs. %,i.e.~$w(v)=1$ for all $v\in V$. 
%Combining $k$-connectivity with the notion of perfect elimination ordering that any chordal graph has.
%A \textit{perfect elimination order}, in short \textit{p.e.o.}, is a ordering of the nodes $\sigma: V \to |V|$ in a graph $G=(V,E)$ which satisfies for each node $v \in V$ that the neighbors of $v$ which occur in the order after $v$ form a clique.
%In [\ziena{ref}] it is proved that a graph is chordal if and only if it has a perfect elimination order, and in~\cite{rose1976algorithmic} it is shown that it can be found in time $\bigO(|V| + |E|)$.
We use a p.e.o.~to compute a vertex partition, as described formally in Algorithm~\ref{alg::chordal}.
%Combining now the $k$-connectivity with the perfect elimination order allows Algorithm~\algChordalGL~\ref{alg::chordal}.
This algorithm receives as input a $k$-connected chordal graph $G=(V,E)$, terminal vertices $t_1,\ldots,t_k \in V$, and natural numbers $n_1,\ldots,n_k$ satisfying $\sum_{i=1}^kn_i=n$,
and outputs connected vertex sets $S_1,\ldots,S_k \subseteq V$ such that $|S_i|=n_i$ and $t_i \in S_i$.
In the beginning of the algorithm we initialize each set $S_i$ to contain only the corresponding terminal vertex $t_i$, and add vertices iteratively to the \textit{non-full sets} ($S_i$'s that have not reached their demanded size).
We say a vertex \textit{$v$ is assigned} if it is already part of some $S_i$ and \textit{unassigned} otherwise.
At each iteration, the unassigned neighborhood of the union of the previously non-full sets is considered, and the vertex with the minimum p.e.o.~value is selected to be added to a non-full set.
In case there is more than one non-full set in the neighborhood of this vertex, it is added to the one with lowest priority, where the priority of each set is defined to be the largest p.e.o.~value of its vertices so far.
The algorithm terminates once all vertices are assigned, in $\bigO({|V|}^2)$ time.

%In each iteration of the while-loop we increase $|\bigcup_{i \in [k]} S_i|$, that is, the while-loop iterates not more than $|V|$ times.
%Each operation in the while-loop runs in time $\bigO(|V|)$, which results in a total running time of $\bigO({|V|}^2)$.

\begin{algorithm}[h]
	\KwInput{$k$-connected chordal graph $G=(V,E)$, terminal vertices $t_1,\ldots,t_k \in V$, and natural numbers $n_1,\ldots,n_k$ satisfying $\sum_{i=1}^kn_i=n$}
	\KwOutput{Connected vertex sets $S_1,\ldots,S_k \subseteq V$ such that $|S_i|=n_i$ and $t_i\in S_i$}
	
	$\sigma$ $\gets$ Compute p.e.o.~of $G$ as function $\sigma \colon V \to |V| $\\
	%\tcp*{this is a comment}
		
	$S_i\leftarrow \{t_i$\}, for all $i \in [k]$
	
	\While{$\bigcup_{i \in [k]} S_i\neq V(G)$}{
		
		$I \gets \{i \in [k] \mid |S_i| < n_i\}$
		
		$V' \gets N(\bigcup_{i \in I} S_i) \setminus \bigcup_{i \in [k]\setminus I} S_i$
		
		%$v' \gets v \in  V'\colon \sigma(v) = \min(\sigma(V'))$
		$v' \gets \argmin_{v\in V'}\sigma(v)$
		
		$J \gets \left\{i \in I \mid v \in N(S_i) \right\}$
		
		%$j' \gets$ $j \in J\colon$ $\max(\sigma(S_j)) = \min\left(\{\max\sigma(S_{\ell}) \mid {\ell \in J} \}\right)$
		$j' \gets$ $\argmin_{j\in J}\max(\sigma(S_{j}))$
		
		$S_{j'} \gets S_{j'} \cup \{v'\}$
	}

	\Return $S_1,\dots,S_k$
	\caption{\algChordalGL}
	\label{alg::chordal}
\end{algorithm}

For the correctness of Algorithm~\ref{alg::chordal} it is enough to show that the unassigned neighborhood $V'$ of all non-full sets is not empty in each iteration of the while-loop, since this implies that we enlarge a non-full set (in the algorithm denoted as $S_{j'}$) by one vertex (in the algorithm denoted as $v'$) while maintaining the size of all remaining sets.
That is, in each iteration we make progress in the sense that $|\bigcup_{i \in [k]} S_i|$ increases while maintaining the invariant $|S_i| \leq n_i$ for all $S_i$'s.
Note that $v' \in N(S_{j'})$ which in turn implies that $G[S_i]$ is always connected for all $i \in [k]$.
Finally, by $\sum_{i=1}^k n_i=n$ and through the way we update $I$ we ensure that the algorithm (or while-loop) terminates as $\bigcup_{i \in [k]} S_i = V$ only if we have $|S_i| = n_i$ for all $S_i$'s.

Towards proving the required lemmata for the correctness of Algorithm~\ref{alg::chordal} we make the following observation for the p.e.o.~of a graph.
% perfect elimination orders.

\begin{lemma}
	\label{lemma::ordering_induced_path}
	Let $\sigma$ be a p.e.o~of a graph $G=(V,E)$ and $P=\{v_1,v_2,\ldots,v_k\}$ a vertex set of $G$ that induces a simple path with endpoints $v_1$ and $v_k$. 
	Then $\sigma(v_i)>\min\{\sigma(v_1),\sigma(v_k)\}$ for all $i=2,\ldots,k-1$.
\end{lemma}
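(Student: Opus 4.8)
The plan is to reduce the lemma to a single \emph{claim}: among the vertices of $P$, the one with smallest $\sigma$-value is an endpoint, i.e.\ it is $v_1$ or $v_k$. Once this claim is established the lemma follows at once, since $\sigma$ is injective (it is a bijection $V\to\{1,\dots,n\}$), so every internal vertex $v_i$ with $2\le i\le k-1$ is distinct from the minimiser and hence satisfies $\sigma(v_i)>\min_{v\in P}\sigma(v)=\min\{\sigma(v_1),\sigma(v_k)\}$. The cases $k\le 2$ are vacuous, so I would assume $k\ge 3$ from here on.

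To prove the claim I would argue by contradiction. Let $v_j$ be the vertex of $P$ minimising $\sigma$, and suppose $2\le j\le k-1$, so that $v_{j-1}$ and $v_{j+1}$ both exist and both lie on $P$. By minimality $\sigma(v_{j-1})\ge\sigma(v_j)$ and $\sigma(v_{j+1})\ge\sigma(v_j)$, so both $v_{j-1}$ and $v_{j+1}$ lie in $V':=\{v\in V:\sigma(v)\ge\sigma(v_j)\}$, and both are neighbours of $v_j$ in $G$ since they are consecutive with $v_j$ on the induced path $P$. Unwinding the definition of a p.e.o., the initial segment $\{v\in V:\sigma(v)\ge\sigma(v_j)\}$ of the elimination order is exactly $V'$, and $v_j$ is simplicial in $G[V']$, so $N_G(v_j)\cap V'$ induces a clique. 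In particular $v_{j-1}v_{j+1}\in E(G)$.

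This is the contradiction: since $P$ induces a simple path with consecutive vertices $v_{j-1},v_j,v_{j+1}$, the vertices $v_{j-1}$ and $v_{j+1}$ are non-consecutive on the path, hence non-adjacent in $G[P]=G[\{v_1,\dots,v_k\}]$, so $v_{j-1}v_{j+1}\notin E(G)$. Therefore the minimiser $v_j$ cannot be internal, which proves the claim and with it the lemma.

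There is no real obstacle here; the only points that need a little care are (i) translating the ``simplicial in $G[\{v_i,\dots,v_n\}]$'' formulation of a p.e.o.\ into the usable form ``for every vertex $u$, $N_G(u)\cap\{w:\sigma(w)\ge\sigma(u)\}$ induces a clique'', and (ii) observing that the strictness in the conclusion is free, coming from injectivity of $\sigma$ together with the fact that the minimiser is an endpoint rather than an internal vertex. One could instead run the whole thing as an induction on $k$, peeling off the smallest-$\sigma$ endpoint, but the direct extremal argument above is cleaner.
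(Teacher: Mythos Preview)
Your proof is correct and follows essentially the same approach as the paper: assume for contradiction that the $\sigma$-minimiser on $P$ is an internal vertex $v_j$, observe that both $v_{j-1}$ and $v_{j+1}$ then have larger $\sigma$-value, and use simpliciality of $v_j$ in $G[\{v:\sigma(v)\ge\sigma(v_j)\}]$ to force the chord $v_{j-1}v_{j+1}$, contradicting that $P$ is induced. Your write-up is slightly more careful about the trivial cases and about why the inequality is strict, but the argument is the same.
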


\begin{lemma}
	\label{lemma::neighborhood_non_empty}
	In each iteration of the while-loop in Algorithm~\ref{alg::chordal} we have $V' \neq \varnothing$.
\end{lemma}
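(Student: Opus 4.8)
The plan is to argue by contradiction: suppose that at some iteration of the while-loop we have $V' = \varnothing$, i.e., $N(\bigcup_{i \in I} S_i) \subseteq \bigcup_{i \in [k]\setminus I} S_i$. Let $A := \bigcup_{i \in I} S_i$ be the union of the non-full sets and $B := \bigcup_{i \in [k]\setminus I} S_i$ be the union of the full sets. Since the loop has not terminated, $A \cup B \neq V$, so there is at least one unassigned vertex, and by connectivity of $G$ together with $N(A) \subseteq B$, we see that $A$ and the set $U$ of unassigned vertices are separated by $B$ in $G$ — every path from $A$ to $U$ must pass through $B$. The first idea is to extract from this a separator of $G$ that is "too small", contradicting $k$-connectivity, using the fact that $I$ is nonempty (so at least one terminal lies in $A$) and that some unassigned vertex exists (so $U \neq \varnothing$), while $B$ is a union of \emph{at most} $k-1$ connected sets, each containing one terminal.

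However, $|B|$ itself need not be smaller than $k$, so we cannot directly use $B$ as the separator; the key step is to find a \emph{small} subset of $B$ that already separates $A$ from $U$. Here is where the p.e.o.\ and Lemma~\ref{lemma::ordering_induced_path} enter. I would pick a vertex $u \in U$ and, using $k$-connectivity of $G$, obtain $k$ internally vertex-disjoint paths from $u$ to the $k$ terminals $t_1,\dots,t_k$ (Menger's theorem), or alternatively $k$ disjoint paths from $u$ to $A \cup B$. Restricting to the $|I|$ paths ending at terminals in $A$, each such path must cross $B$, and by disjointness it hits $|I|$ distinct vertices of $B$; but this over-counts. The cleaner route: let $b$ be the vertex of $B$ with the smallest p.e.o.\ value among all vertices in $B$ that have a neighbor in $A$, and examine why it was assigned before any current vertex of $U$. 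Because vertices are added in order of increasing p.e.o.\ value restricted to the frontier $V'$, every vertex currently in $B$ was once chosen as the minimum-p.e.o.\ frontier vertex; at the moment a vertex $v \in U$ would be reachable, it would have had to appear in some earlier $V'$. The contradiction should come from showing that the current $A$-to-$U$ cut is witnessed by fewer than $k$ vertices of $B$: since each full set $S_i$ ($i \notin I$) is connected and contains a terminal, the "boundary" vertices of $B$ touching $A$, combined with Lemma~\ref{lemma::ordering_induced_path} applied to an induced path realized inside a chordal graph (where induced paths are short), pin down a separator of size at most $|B \cap N(A)| \le k-1$ plus the simpliciality structure.

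The main obstacle will be making the separator-size bound precise: it is not enough that $B$ is a union of $k-1$ sets; I need that the vertices of $B$ actually \emph{adjacent} to $A$ (the true separating set) number at most $k-1$, and this is where simpliciality and the p.e.o.\ must do real work. The intended mechanism is that because vertices were added to the $S_i$'s in p.e.o.\ order via the frontier $V'$, the set of "active boundary" vertices between $A$ and the rest stays an (almost-)clique: when the minimum-p.e.o.\ vertex $v'$ was added, its lower-indexed neighbors in $G[\{v', \dots\}]$ form a clique, so the boundary does not grow uncontrollably. Concretely, I expect to prove the invariant that at every iteration, $N(A) \setminus U$ is contained in a clique of $G$, hence has size at most $\omega(G) - 1 \le$ something, and since $G$ is $k$-connected every separator has size $\ge k$; combining, $N(A)\setminus U$ cannot separate $A$ from $U$ unless it has size $\ge k$, forcing $V' = N(A)\setminus (B) \ne \varnothing$ whenever $U\ne\varnothing$. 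Proving that clique invariant — tracking how $N(A)$ evolves as we repeatedly add the globally-minimum frontier vertex and using the p.e.o.\ property at each step — is the technical heart of the argument.
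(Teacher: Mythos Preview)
Your proposal does not yet contain a working argument, and the two approaches you sketch both have genuine gaps.

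First, the Menger/fan idea does not yield the needed contradiction: from an unassigned $u$ you can indeed get $k$ internally disjoint paths, one to each terminal, and the $|I|$ paths ending in $A$ must enter $A$ through $B$; but this only tells you that $B$ contains at least $|I|$ vertices, which is not a contradiction since $|B|$ is not bounded by $k-1$.

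Second, the ``clique invariant'' you propose --- that $N(A)\cap B$ is always a clique --- is neither proven nor, as far as I can see, true; the non-full sets can grow in different directions and become adjacent to different full sets at unrelated places. And even if it were a clique, a clique in a $k$-connected chordal graph can have size $\geq k$, so this alone would not give a small separator.

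The idea you are missing is that the algorithm does more than add the minimum-p.e.o.\ frontier vertex: it also chooses, among the non-full sets adjacent to that vertex, the one whose current maximum p.e.o.\ value is \emph{smallest} (line~8). This tie-breaking rule is essential. The paper's proof does not look at the whole boundary $N(A)\cap B$; instead it fixes one non-full set $S_i$ whose neighborhood is (for contradiction) entirely assigned, takes $v\in S_i$ with maximum $\sigma$-value and $z$ unassigned with maximum $\sigma$-value, and considers the set $C$ of neighbors of $v$ (or of $z$, depending on which of $\sigma(v),\sigma(z)$ is larger) that lie on some induced $v$--$z$ path. Lemma~\ref{lemma::ordering_induced_path} forces every vertex of $C$ to have larger $\sigma$-value than $\min\{\sigma(v),\sigma(z)\}$; combining this with the priority rule, one shows that no set $S_j$ can contain two vertices of $C$ (otherwise the second one would have been assigned to $S_i$, or $z$ would have been assigned earlier). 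Hence $|C|\le k-1$, yet removing $C$ disconnects $v$ from $z$, contradicting $k$-connectivity. Your write-up never invokes the priority rule, and without it the ``at most one per set'' count --- which is the real reason the separator is small --- cannot be obtained.
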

\begin{proof}
	We first define the \textit{$z$-connecting} neighborhood of a vertex $v$ to be the neighbors of $v$ that are included in some induced path connecting $v$ to $z$.
	
	We prove that every non-full set $S_i$ contains a vertex in its neighborhood $N(S_i)$ that is unassigned, which implies that $V' \neq \varnothing$.
	Assume for a contradiction that at some iteration of our algorithm there is an non-full set $S_i$ whose neighborhood is already assigned to other sets.
	Let $v$ be the vertex of $S_i$ of maximum $\sigma$ value among its vertices and $z$ be the vertex of maximum $\sigma$ value among the unassigned vertices. 
	Note that $vz\not \in E(G)$.
	Let $\mathcal P$ be the set of all simple induced paths of $G$ with endpoints $z$ and $v$. Consider now the following cases:
	\begin{enumerate}
		\item 
		%Case $\sigma(z)>\sigma(v):$
		%since $\sigma(z)>\sigma(v)$ 
		If $\sigma(z)>\sigma(v)$, we get from \Cref{lemma::ordering_induced_path} that every internal vertex of each path in $\mathcal P$ has higher $\sigma$ value than $v$.
		Note that no vertex of $S_i$ is an internal vertex of some path in $\mathcal P$, since all of them have smaller $\sigma$ value than $v$ by the selection of $v$.
		Denote the $z$-connecting neighborhood of $v$ by $C$.
		
		Let $a,b$ be two vertices in $C$ and assume that $a,b\in S_j$ for some $j$.
		Assume also that during our algorithm, $a$ is added to $S_j$ before $b$.
		Since all vertices of $S_i$ have smaller $\sigma$ value than both $a$ and $b$, and $a$ is added to $S_j$ before $b$, the moment $b$ is added to $S_j$, $S_i$ has already been formed.
		Consider now the iteration that this happens.
		Since $b\in N(v)$, $G[S_i\cup \{b\}]$ is connected.
		Moreover since $\sigma (a)>\sigma(v)$ and $S_i$ is not full, $b$ should be added to $S_i$ instead of $S_j$.
		As a result each set apart from $S_i$ contains at most one such neighbor of $v$, and hence $|C|<k$.
		
		Observe that $G\setminus C$ has no induced path connecting $z$ and $v$ which in turn implies that $G\setminus C$ has no $z-v$ path in general.
		However, this contradicts the $k$-connectivity of $G$.
		
		\item 
		%Case $\sigma(z)<\sigma(v):$
		If $\sigma(z)<\sigma(v)$,
		since $z$ is the unassigned vertex of the highest $\sigma$ value among all unassigned vertices, and by \Cref{lemma::ordering_induced_path} all vertices in $\mathcal P$ have greater $\sigma$ value than $z$, all of its $v$-connecting neighbors in $\mathcal P$ are already assigned in some set.
		Denote the set of $v$-connecting neighbors of $z$ by $C$.
		
		Assume now that there are two vertices of $C$, $a$ and $b$, that are contained in some $S_j$ and assume also without loss of generality that $a$ was added to $S_j$ before $b$.
		Note that since $\sigma(z)<\sigma(b)$ at each iteration of our algorithm $z$ is considered before $b$ to be added to some set if the induced graph remains connected.
		As a result, after $a$ is added to $S_j$, the induced subgraph $G[S_j \cup \{z\}]$ is connected and hence $z$ should be added to $S_j$ before $b$.
		
		This means that each set contains at most one $v$-connecting neighbor of $z$ and therefore $|C|<k$.
		Since $G\setminus C$ has no induced path connecting $z$ and $v$, there is no $z$-$v$-path in $G \setminus C$, which contradicts the $k$-connectivity. 
	\end{enumerate}
\end{proof}

% We conclude the proof of the unweighted version of \Cref{thm::GL-theorem-chordal-weighted} by stating the running time.

%\begin{lemma}
%	\label{lemma::running-time1}
%	Algorithm~\ref{alg::chordal} runs in time $\bigO(|V|^2)$.
%\end{lemma}

\begin{corollary}
	\label{corollary::never_iso}
	At each iteration of Algorithm~\ref{alg::chordal}, unless all vertices are assigned, the neighborhood of each non-full set contains at least one unassigned vertex.  
\end{corollary}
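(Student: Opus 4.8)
The plan is to observe that this corollary is, in fact, the per-set statement already established inside the proof of \Cref{lemma::neighborhood_non_empty}: that argument does not merely conclude $V' \neq \varnothing$, it shows for an \emph{arbitrary} non-full set $S_i$ that $N(S_i)$ cannot be entirely assigned. So the first step is simply to extract that stronger claim and restate it under the explicit hypothesis that not all vertices have yet been assigned.

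Concretely, I would rerun the same proof by contradiction. Fix an iteration at which some vertex is still unassigned, and suppose towards a contradiction that there is a non-full set $S_i$ all of whose neighbours have already been assigned to other sets. Let $v$ be the vertex of $S_i$ of maximum $\sigma$-value and, using the hypothesis, let $z$ be the unassigned vertex of maximum $\sigma$-value; note $vz \notin E(G)$. Then the two cases $\sigma(z) > \sigma(v)$ and $\sigma(z) < \sigma(v)$ are treated exactly as in \Cref{lemma::neighborhood_non_empty}: by \Cref{lemma::ordering_induced_path} the internal vertices of every induced $z$--$v$ path have $\sigma$-value larger than $\min\{\sigma(v),\sigma(z)\}$, and the tie-breaking rule of \algChordalGL forces the relevant $z$-connecting (respectively $v$-connecting) neighbourhood $C$ to meet each $S_j$ in at most one vertex, so $|C| < k$; since $G \setminus C$ then contains no induced $z$--$v$ path, and hence no $z$--$v$ path at all, this contradicts the $k$-connectivity of $G$.

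The only thing that genuinely needs checking — and it is the sole, very mild, obstacle — is that the selection of $z$ uses nothing beyond the existence of a single unassigned vertex, which is precisely what ``unless all vertices are assigned'' provides; the rest of the argument never refers to a particular target set for $z$, so it applies to each non-full set $S_i$ independently. Hence the neighbourhood of every non-full set contains an unassigned vertex, as claimed.
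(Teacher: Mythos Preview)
Your proposal is correct and matches the paper's approach exactly: the paper presents this as a corollary with no separate proof, because the proof of \Cref{lemma::neighborhood_non_empty} explicitly begins by establishing the per-set claim (``We prove that every non-full set $S_i$ contains a vertex in its neighborhood $N(S_i)$ that is unassigned''), and your proposal is precisely to extract and restate that argument. Your additional remark that the hypothesis ``unless all vertices are assigned'' is exactly what guarantees the existence of $z$ is apt and makes explicit something the paper leaves implicit via the while-loop condition.
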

In the weighted case we use the above corollary of \Cref{lemma::neighborhood_non_empty}.
In particular, it follows from \Cref{corollary::never_iso} that as long as we do not declare a set to be full, we ensure that we are able to extend it by a vertex in its neighborhood that is unassigned.
Note that in the weighted case we do not know in advance how many vertices are in each part.

\subsection{\glp for Weighted Chordal Graphs}

With a slight modification of Algorithm~\ref{alg::chordal} we can compute the weighted version of a \glp.
In particular, we prove \Cref{thm::GL-theorem-chordal-weighted}.

The input of our algorithm differs from the unweighted case by having a positive vertex-weighted graph $G=(V,E,w)$ and instead of demanded sizes $n_1, \dots, n_k$ we have demanded weights $w_1, \dots, w_k$ for our desired vertex sets $S_1, \dots, S_k$, where $\sum_{i=1}^{k} w_i = w(V)$.
Note also that  $w(S_i)$ is not allowed to deviate more than $\wmax = \max_{v \in V} w(v)$ from $w_i$, i.e.~$w_i - \wmax < w(S_i) < w_i + \wmax$.
\begin{algorithm}[h]
	\KwInput{$k$-connected vertex-weighted chordal graph $G(V,E,w)$, terminal vertices $t_1,\ldots,t_k \in V$, and positive weights $w_1,\ldots,w_k$ satisfying $\sum_{i=1}^k w_i=w(V)$}
	\KwOutput{Connected vertex sets $S_1,\ldots,S_k \subseteq V$ such that $w_i - w_{\max} < w(S_i) < w_i + w_{\max}$ and $t_i\in S_i$}
	
	$\sigma$ $\gets$ Compute p.e.o.~of $G$ as function $\sigma \colon V \to |V| $\\
	%\tcp*{this is a comment}
	
	$S_i\leftarrow \{t_i$\}, for all $i \in [k]$
	
	$I \gets \{i \in [k] \mid w(S_i) < w_i\}$
	
	%\While{$\bigcup_{i \in [k]} S_i\neq V(G)$}{
	%\While{$I \neq \varnothing$}{
	\While{$|I| \neq 1$ \textbf{and} $\bigcup_{i \in [k]} S_i\neq V(G)$}{
		
		%$\oI \gets [k] \setminus I$
		
		$V' \gets N(\bigcup_{i \in I} S_i) \setminus \bigcup_{i \in [k] \setminus I} S_i$
		
		$v' \gets \argmin_{v\in V'}\sigma(v)$
		
		$J \gets \left\{i \in I \mid v \in N(S_i) \right\}$
		
		$j' \gets$ $\argmin_{j\in J}\max(\sigma(S_{j}))$
		
		\If{$w(S_{j'}) + w(v') < w_{j'}$}
		{
			$S_{j'} \gets S_{j'} \cup \{v'\}$
		}
		\Else
		{
			$I \gets I \setminus \{j'\}$
			
			\If{$\sum_{i \in [k] \setminus I} (w_i - w(S_i)) \geq 0$ \textbf{or} $w(S_{j'}) + w(v') = w_{j'}$}
			{
				\label{alg::weighted-if2}
				$S_{j'} \gets S_{j'} \cup \{v'\}$
				
			}
		}
	}
	If $|I| = 1$, assign all vertices $V \setminus \bigcup_{i \in [k]} S_i$ (possibly empty) to $S_j$ with $j \in I$.
	\label{algWeightedChordal::last-step} 
	\caption{\algChordalGLweighted}
	\label{alg::chordal-weighted}
\end{algorithm}

%Check again how this sentence is written
Again we set each terminal vertex $t_i$ to a corresponding set $S_i$, and enlarge iteratively the \textit{non-full weighted sets} ($S_i$'s that are not declared as full).
One difference to the previous algorithm is that we declare a set $S_i$ as \textit{full weighted set}, if together with the next vertex to be potentially added its weight would exceed $w_i$.
After that, we decide whether 
%or not 
to add the vertex with respect to the currently full weighted sets.
Similar to Algorithm~\ref{alg::chordal} we interrupt the while-loop if $S_1,\dots,S_k$ forms a vertex partition of $V$ and the algorithm terminates.
However, to ensure that we get a vertex partition in every case, we break the while-loop when only one non-full weighted set is left and assign all remaining unassigned vertices to it.
%Otherwise, the way the algorithm works, vertices may not be assigned.

%\newpage

Observe that we can make use of \Cref{corollary::never_iso}, since Algorithm~\ref{alg::chordal-weighted} follows the same priorities concerning the p.e.o.~as Algorithm~\ref{alg::chordal}.
Basically, it implies that as long we do not declare a set as full weighted set and there are still unassigned vertices then those sets have unassigned vertices in its neighborhood.
%Similarly to Algorithm~\algChordalGL~\ref{alg::chordal}, because we add only vertices to incomplete weighted sets from its unassigned neighborhood, the $S_i$'s corresponds always to connected vertex sets.
%Due to space restrictions, the proof of correctness of Algorithm~\ref{alg::chordal-weighted} (\algChordalGLweighted) can be found in Appendix~\ref{appendix1}.

We conclude this section by extending the above algorithms to graphs having distance $k/2$ from being chordal. 
In particular this corollary is based on the observation that an edge added to a graph does not participate in any of the parts those algorithms output if both of its endpoints are terminal vertices.

\begin{corollary}
	Let $G$ be a $k$-connected graph which becomes chordal after adding $k/2$ edges. Given this set of edges, a \glp (also its weighted version) can be computed in polynomial time but without fixed terminals.
\end{corollary}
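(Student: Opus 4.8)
\emph{Proof plan.} The plan is to reduce directly to the chordal case by choosing the terminals so that they ``absorb'' the extra edges. Let $F$ be the given set of at most $k/2$ edges with $G' := (V, E \cup F)$ chordal, and let $T \subseteq V$ be the set of endpoints of the edges of $F$, so that $|T| \le 2|F| \le k$. First note that $G'$ is still $k$-connected, since vertex connectivity cannot decrease when edges are added (after deleting any $\le k-1$ vertices, $G'$ still contains the connected graph obtained from $G$ by the same deletion); together with chordality this makes Algorithm~\ref{alg::chordal} (and Algorithm~\ref{alg::chordal-weighted}) applicable to $G'$. Extend $T$ to a set of exactly $k$ pairwise distinct vertices $\{t_1, \dots, t_k\} \supseteq T$, which is possible because a $k$-connected graph has at least $k+1$ vertices. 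Run \algChordalGL on the instance $(G', t_1, \dots, t_k, n_1, \dots, n_k)$; by the correctness already established for chordal inputs (\Cref{lemma::neighborhood_non_empty} and \Cref{corollary::never_iso}) this returns a partition $S_1, \dots, S_k$ of $V$ with $t_i \in S_i$, $|S_i| = n_i$, and $G'[S_i]$ connected for every $i$.

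The key point is that ``$G'[S_i]$ connected'' already yields ``$G[S_i]$ connected''. Indeed, the only edges of $G'$ that are absent from $G$ are those of $F$, and for any $e = \{u,v\} \in F$ both $u$ and $v$ lie in $T$, hence are distinct terminals, hence belong to distinct parts $S_a \ne S_b$. Therefore no part contains both endpoints of an edge of $F$, so $E(G'[S_i]) = E(G[S_i])$ for all $i$, and $S_1, \dots, S_k$ is a \glp of $G$ — one, however, whose terminals we were forced to choose ourselves, which is exactly the ``without fixed terminals'' caveat in the statement. Constructing $G'$, computing a p.e.o., and running \algChordalGL all fit within $\bigO(|V|^2)$ time, matching the bound of \Cref{thm::GL-theorem-chordal-weighted}, so the construction is polynomial.

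For the weighted version one runs \algChordalGLweighted in place of \algChordalGL on $(G', t_1, \dots, t_k, w_1, \dots, w_k)$, and the separation argument for $F$ is verbatim the same. The one extra ingredient is that \Cref{thm::GL-theorem-chordal-weighted} requires its terminals to satisfy $w(t_i) \le w_i$: since here we choose the terminals, we must pick an injection of $T$ into the $k$ parts (filling the remaining parts with light non-$T$ vertices) that respects this inequality. I expect this to be the main obstacle, since in general it is a Hall-type/greedy matching condition between the weights of the vertices of $T$ and the demands $w_1, \dots, w_k$ that must be shown satisfiable using $\sum_{i} w_i = w(V) > w(T)$; alternatively, one can argue that \algChordalGLweighted keeps the deviation guarantee $w_i - \wmax < w(S_i) < w_i + \wmax$ even when a terminal is heavier than its demand (the corresponding set is immediately declared full and retains weight $w(t_i) \le \wmax < w_i + \wmax$), which lets us drop the precondition entirely. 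Everything else — applicability of the chordal algorithm to $G'$, the placement of the endpoints of $F$ into distinct parts, and the $\bigO(|V|^2)$ running time — carries over without change.
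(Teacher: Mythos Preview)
Your argument is correct and is exactly the paper's approach: the paper's entire justification for this corollary is the one-line observation that an added edge never lies inside a single part once both of its endpoints are made terminals, and you reconstruct this in full detail (including the easy facts that $|T|\le k$ and that adding edges preserves $k$-connectivity). Your discussion of the weighted case is actually more careful than the paper's, which simply asserts the weighted extension without mentioning the precondition $w(t_i)\le w_i$; your flagged obstacle is genuine, and of your two suggested fixes the matching/assignment route is the safer one, since the invariant in \Cref{lemma::invariant-weights} is proved under the assumption $w(t_i)\le w_i$ and does not obviously survive dropping it.
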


%The next lemma proves that Algorithm \algChordalGLweighted~\ref{alg::chordal-weighted} terminates eventually.

%\begin{lemma}
%	In the Algorithm \algChordalGLweighted~\ref{alg::chordal-weighted}, all vertices $V$ of $G$ becomes assigned after at most $k + |V|$ iterations of the while-loop.
%\end{lemma}

%\begin{proof}
%We prove the lemma by contradiction.
%Suppose $I = \varnothing$, i.e.~there exists no incomplete weighted set, and there is a vertex of $G$ that is unassigned.
%We consider the moment where the last incomplete weighted set, let say $S_\ell$, is declared as complete weighted set.
%Let $X := \sum_{i \in [k] \setminus \{\ell\}} w_i - w(S_i)$.
%If $X < 0$, then we have $w(V) - \sum_{i \in [k] \setminus \{\ell\}} w(S_i) < w_{\ell}$ contradicting that $S_{\ell}$ can be declared as full.
%If $X \geq 0$, then

%At the moment the last incomplete set is declared

%By \cref{corollary::never_iso} we know that every incomplete weighted set has at least one unassigned vertex in its neighborhood, as long as such exist.
%Also, note that all vertices are assigned in at most $|V| + k$ iterations.
%This in turn implies that a contradiction with the assumption we made is sufficient to prove the lemma.

%\end{proof}

\section{\glp for \ourclass}
%!TEX root = WG.tex
%\begin{definition}
%	We define the graph class $DC_i$, being the class of hole free graphs such that the existing $C_4$ have pairwise at most $i$ vertices in common.
%	We also define by $HDC_i$ the class of graphs that belong in $DC_i$ and are also house free.
%\end{definition}
This section is dedicated to the proof of \Cref{thm:ourclass-GL-partition}. The underlying idea  for this result is to carefully contract edges to turn a $k$-connected \ourclass graph into a  chordal graph that is still $k$-connected. Note that we indeed have to be very careful here to find a set of contractions, as we need it to satisfy three seemingly contradicting properties: removing all induced $C_4$, preserving $k$-connectivity, and contracting at most one edge adjacent to each vertex. The last property is needed to bound the maximum weight of the vertices in the contracted graph. Further, we have to be careful not to contract terminal vertices. 

The computation for the unweighted case of the partition for \Cref{thm:ourclass-GL-partition} is given in Algorithm~\ref{alg:: ourclassGL} below,  which is later extended to the weighted case as well. Note that we can assume that $n_i\geq 2$ since if $n_i=1$ for some $i\in[k]$ we simply declare the terminal vertex to be the required set and remove it from $G$. This gives us a $(k-1)$-connected graph and $k-1$ terminal vertices.

%We start this section by providing the algorithm that computes a partition stated in Theorem~\ref{thm:ourclass-GL-partition}.
%Before doing that notice that we can assume that $n_i\geq 2$ since if $n_i=1$ for some $i\in[k]$ we simply declare the terminal vertex to be the required set and remove it from $G$. This gives us a $(k-1)$-connected graph and $k-1$ terminal vertices.

%\begin{algorithm}[h]
%	\KwInput{$k$-connected \ourclass graph $G(V,E)$, terminal vertices $t_1,\ldots,t_k \in V$, and positive integers $n_1,\ldots,n_k$ satisfying $\sum_{i=1}^k n_i=n$}
%	\KwOutput{Connected vertex sets $S_1,\ldots,S_k \subseteq V$ such that $n_i-1\leq|S_i|\leq n_i+1$ and $t_i\in S_i$}
%	
%	For all pairs of terminals check if they are part of the same induced chordless $C_4$ and if they are non-adjacent add an edge connecting them and update $G$.\\
%	
%	Find all induced chordless $C_4$ of $G$, and set the graph that occurs just by them $G'$.\\
%	
%	Initialize a set $E$ to be empty.\\
%	
%	While $G'$ is non empty\\
%	find a $C_4$, $C$, that shares a vertex with at most one other\\
%	
%	From the three vertices belonging in just $C$ choose an edge that has at most one terminal as an endpoint and add it to $E$.\\
%	
%	Remove the three vertices considered and update $G$ accordingly.\\
%	
%	After finishing the while loop contract all edges in $E$ on $G$ and update $G$.\\
%	
%	Run algorithm~\ref{alg::chordal} with $G$, the set of terminals and the size demands as input.
%	\caption{\ourclass GL}
%	\label{alg:: ourclassGL}
%\end{algorithm}
Before starting to prove the Lemmata required for the correctness of Algorithm~\ref{alg:: ourclassGL} we give a structural insight which is used in almost all proofs of the following Lemmata.
Due to space restrictions the proofs of Lemmata \ref{lemma:observation} to  \ref{lemma:closed-under-contraction}, \ref{lemma:contraction-connectivity} and \ref{lemma:C4-tree} have been moved to Appendix~\ref{appendix2}.
\begin{lemma}\label{lemma:observation}
	Given an \ourclass graph $G$ and an induced $C_4$, $C\subseteq V(G)$, then any vertex in $V(G)\setminus C$ that is adjacent to two vertices of $C$ is universal to $C$. Moreover, the set of vertices that are universal to $C$ induces a clique.
\end{lemma}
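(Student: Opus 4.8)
The plan is to fix the induced $C_4$ as $C=\{c_1,c_2,c_3,c_4\}$ with edges $c_1c_2,c_2c_3,c_3c_4,c_4c_1$ and non-edges $c_1c_3,c_2c_4$, and to argue by a short case analysis on $|N(v)\cap C|$ for a vertex $v\in V(G)\setminus C$ with $|N(v)\cap C|\ge 2$. The guiding observation is that whenever $v$ is not universal to $C$, the set $C\cup\{v\}$ already contains a configuration forbidden in an \ourclass graph, namely a house or a pair of distinct induced $C_4$'s sharing more than one vertex.

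For the first claim I would distinguish $|N(v)\cap C|=2$ and $|N(v)\cap C|=3$. If $|N(v)\cap C|=2$ and the two neighbours are consecutive on $C$, say $c_1,c_2$, then $G[\{c_1,c_2,c_3,c_4,v\}]$ has exactly the edges $c_1c_2,c_2c_3,c_3c_4,c_4c_1,vc_1,vc_2$, i.e.\ it is a house, contradicting house-freeness. If instead the two neighbours are opposite, say $c_1,c_3$, then $G[\{v,c_1,c_2,c_3\}]$ is an induced $C_4$ (edges $vc_1,c_1c_2,c_2c_3,c_3v$; the candidate chords $vc_2$ and $c_1c_3$ are both absent), which is distinct from $C$ and shares the three vertices $c_1,c_2,c_3$ with it, contradicting the fact that an \ourclass graph has no two distinct induced $C_4$'s sharing more than one vertex. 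Finally, if $|N(v)\cap C|=3$, say $N(v)\cap C=\{c_1,c_2,c_3\}$, then $G[\{v,c_1,c_4,c_3\}]$ is an induced $C_4$ (edges $vc_1,c_1c_4,c_4c_3,c_3v$; chords $vc_4$ and $c_1c_3$ absent), again distinct from $C$ and sharing $c_1,c_3,c_4$ with it, the same contradiction. Hence $|N(v)\cap C|=4$, which proves the first part. I would also record the small point that no vertex of $C$ is universal to $C$ (each $c_i$ misses its opposite vertex on the cycle), so every vertex universal to $C$ lies outside $C$ and the above applies to it.

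For the second claim, suppose $u,v$ are distinct vertices universal to $C$ with $uv\notin E(G)$. Then $G[\{u,c_1,v,c_3\}]$ has edges $uc_1,c_1v,vc_3,c_3u$, and its only possible chords $uv$ and $c_1c_3$ are both missing, so it is an induced $C_4$; it is distinct from $C$ and shares the two vertices $c_1,c_3$ with it, contradicting once more the no-two-distinct-$C_4$'s-sharing-more-than-one-vertex property. Hence the vertices universal to $C$ are pairwise adjacent, i.e.\ they induce a clique. I do not anticipate a genuine obstacle: the whole argument is a finite check, and the only care needed is the routine bookkeeping of verifying in each case that the chosen four vertices really induce a chordless $C_4$ (i.e.\ checking the two relevant non-edges) and that this $C_4$ differs from $C$; note in particular that house-freeness is used in exactly one sub-case while the sharing property of induced $C_4$'s settles all the rest, and hole-freeness is not needed here.
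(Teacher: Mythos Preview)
Your proof is correct and follows essentially the same approach as the paper's own proof: both use house-freeness to dispose of the case where $v$ has exactly two consecutive neighbours on $C$, and the forbidden pair of induced $C_4$'s sharing more than one vertex to handle all remaining non-universal configurations as well as the clique claim. Your case split by $|N(v)\cap C|\in\{2,3\}$ is slightly more explicit than the paper's split by whether the two given neighbours are adjacent, but the underlying logic is identical.
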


\begin{algorithm}[H]
	\KwInput{$k$-connected \ourclass graph $G(V,E)$, terminal vertices $t_1,\ldots,t_k \in V$, and positive integers $n_1,\ldots,n_k \geq 2$ satisfying $\sum_{i=1}^k n_i=n$}
	\KwOutput{Connected vertex sets $S_1,\ldots,S_k \subseteq V$ such that $n_i-1\leq|S_i|\leq n_i+1$ and $t_i\in S_i$}
	
	Add an edge between each pair of non-adjacent terminals that are part of an induced $C_4$
	
	$\C \gets $ Set of all induced $C_4$ in $G$.
	
	$G' \gets$ $(\bigcup_{C \in \mathcal{C}} V(C), \bigcup_{C \in \mathcal{C}} E(C))$
	
	$E' \gets \varnothing$
	
	\While{$\C \neq \varnothing$}
	{
		Select three vertices $v_1,v_2,v_3$ in $G'$ and the corresponding cycle $C \in \C$ that satisfies that for all $C' \in \C \setminus \{C\}$ we have $V(C') \cap \{v_1,v_2,v_3\} = \varnothing$. 
		
		Pick a vertex $v$ from $v_1,v_2,v_3$ that is not a terminal vertex and add an incident edge of $v$ in $G'[\{v_1,v_2,v_3\}]$ to $E'$.
		
		Remove the cycle $C$ from $\C$ and the vertices $v_1,v_2,v_3$ from $G'$.
	}
	
	Transform $G$ to a weighted graph $G''$ by contracting each edge of $E'$ in $G$, assigning to each resulting vertex as weight the number of original vertices it corresponds to. 
	
	$S_1, \dots S_k \gets$ Run Algorithm~\ref{alg::chordal-weighted} with $G''$, the given set of terminals $t_1,\ldots,t_k$, and the size (or weight) demands $n_1,\ldots,n_k$  as input.
	
		Reverse the edge contraction of $E'$ in the sets $S_1, \dots, S_k$ accordingly.
	\caption{\ourclass GL}
\label{alg:: ourclassGL}
\end{algorithm}

%One can see how \Cref{lemma:observation} comes in handy through the proof of the following.

\begin{lemma}\label{lemma:doublehouse}
	Let $G$ be an \ourclass graph. If $G$ contains a double house as a subgraph then at least one of the two $C_4$ in it has a chord.
\end{lemma}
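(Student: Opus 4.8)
The plan is a proof by contradiction driven entirely by \Cref{lemma:observation}. I would first fix notation for the double house: let $v$ be the shared vertex, write the two $C_4$'s as $C_1$ on vertices $v,x_1,w_1,y_1$ with edges $vx_1,x_1w_1,w_1y_1,y_1v$ and $C_2$ on vertices $v,x_2,w_2,y_2$ with edges $vx_2,x_2w_2,w_2y_2,y_2v$, and let the extra edge join two neighbours of $v$ taken from the two different cycles; if instead it joined two neighbours of $v$ inside one cycle it would already be a chord of that cycle and we would be done, so up to relabelling within each $C_4$ the extra edge is $y_1y_2$. The seven named vertices are pairwise distinct. Now suppose for contradiction that neither $C_1$ nor $C_2$ has a chord in $G$; then $C_1$ and $C_2$ are both \emph{induced} $C_4$'s of the \ourclass graph $G$, so in particular $vw_1,x_1y_1,vw_2,x_2y_2\notin E(G)$.

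The key steps are four successive applications of \Cref{lemma:observation} (a vertex outside an induced $C_4$ that is adjacent to two of its vertices is universal to it), each producing new edges. First, $y_2\notin V(C_1)$ is adjacent to $v$ and $y_1$ (using $vy_2\in E(C_2)$ and the extra edge $y_1y_2$), so $y_2$ is universal to $C_1$; in particular $x_1y_2,w_1y_2\in E(G)$. Second, $y_1\notin V(C_2)$ is adjacent to $v$ and $y_2$, so $y_1$ is universal to $C_2$; in particular $w_2y_1\in E(G)$. Third, $x_1\notin V(C_2)$ is now adjacent to $v$ and $y_2$ (the latter from the first step), so $x_1$ is universal to $C_2$; in particular $w_2x_1\in E(G)$. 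Fourth, $w_2\notin V(C_1)$ is now adjacent to $x_1$ and $y_1$ (from the third and second steps), so $w_2$ is universal to $C_1$; in particular $vw_2\in E(G)$. But $v$ and $w_2$ are opposite vertices of the $4$-cycle $C_2$, so $vw_2$ is a chord of $C_2$, contradicting our assumption. Hence at least one of $C_1,C_2$ has a chord in $G$.

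I do not expect a genuine obstacle: all the graph-theoretic content is already packed into \Cref{lemma:observation}, so the only thing to be careful about is the bookkeeping at each application — checking that the vertex being moved really lies outside the relevant $C_4$ (immediate, since the seven double-house vertices are pairwise distinct) and that this $C_4$ is still induced when the lemma is invoked (also immediate, since the edges produced along the way, namely $x_1y_2,w_1y_2,w_2y_1,w_2x_1$, are none of the four potential chords $vw_1,x_1y_1,vw_2,x_2y_2$, and the first potential chord forced is precisely $vw_2$ in the final step, which is exactly where the contradiction lies). A minor point worth stating cleanly up front is the reduction that the extra edge runs between the two $C_4$'s rather than inside one of them; after that the argument is just the short propagation above.
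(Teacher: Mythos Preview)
Your proof is correct and follows essentially the same approach as the paper's: assume both $C_4$'s are induced and repeatedly apply \Cref{lemma:observation} to propagate universality between the two cycles until a forbidden chord appears. Your chain of four applications is in fact one step shorter than the paper's (which performs both symmetric intermediate universality steps before concluding), but the idea and the only tool used are identical.
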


\begin{figure}[h]
	\centering
\begin{tikzpicture}[%
	scale=0.85,
	vertex/.style={draw,fill,circle,inner sep=1pt},%
	edge/.style={draw,thick}]
	\node[vertex, label=below left:{$u_{13}$}] (v00) {};
	\node[vertex, label=below right:{$u_{11}$}] (v10) at ($(v00)+(1,0)$) {};
	\node[vertex, label=below left:{$u_{3}$}] (v01) at ($(v00)+(0,1)$) {};
	\node[vertex, label= below right:{$u_1$}] (v11) at ($(v10)+(0,1)$) {};
	\node[vertex, label=below right:{$u_{21}$}] (v21) at ($(v11)+(1,0)$) {};
	\node[vertex, label=above right:{$u_{2}$}] (v12) at ($(v11)+(0,1)$) {};
	\node[vertex, label=above right:{$u_{22}$}] (v22) at ($(v21)+(0,1)$) {};
	\node at ($0.5*(v00)+0.5*(v11)$) {$C^1$};
	\node at ($0.5*(v11)+0.5*(v22)$) {$C^2$};
	\foreach \u/\v in {v00/v10,v10/v11,v11/v01,v01/v00,v11/v21,v21/v22,v22/v12,v12/v11,v12/v01}
	\draw[edge] (\u)--(\v);
\end{tikzpicture}
\begin{tikzpicture}[%
	scale=0.85,
	vertex/.style={draw,fill,circle,inner sep=1pt},%
	edge/.style={draw,thick}]
	\node[vertex, label=below:{$z_2$}] (v00) {};
	\node[vertex, label=above:{$z_1$}] (v10) at ($(v00)+(0.5,0.5)$) {};
	\node[vertex, label=above:{$z_3$}] (v01) at ($(v00)+(0,1)-(0.5,0.5)$) {};
	\node[vertex, label= above:{$z_4$}] (v11) at ($(v00)+(0,1)$) {};
	
	\node[vertex, label=left:{$u$}] (u) at ($0.5*(v00)+0.5*(v11)+(-2,0)$) {};
	\node[vertex, label=right:{$w$}] (w) at ($(v10)+(1.5,0)$) {};
	\node[vertex, label=above:{$u_1$}] (u1) at ($(v01)+(-0.5,0)$) {};
	\node[vertex, label=above:{$u_3$}] (u3) at ($(v01)+(-0.5,1)$) {};
	\node[vertex, label=below:{$u_2$}] (u2) at ($(v00)+(-1,-0.5)$) {};
	\node[vertex, label=below:{$w_2$}] (w2) at ($(v00)+(1,-0.5)$) {};
	\node[vertex, label=above:{$w_3$}] (w3) at ($(v11)+(1,0.5)$) {};
	\node[vertex, label=below:{$w_1$}] (w1) at ($(v10)+(0.5,0)$) {};	
	\draw[edge,dashed] (u)--(u1);
	\draw[edge, dashed] (u)--(u2);
	\draw[edge, dashed] (u)--(u3);
	\draw[edge, dashed] (w)--(w1);
	\draw[edge, dashed] (w)--(w2);
	\draw[edge, dashed] (w)--(w3);
	\draw[edge] (u1)--(v01);
	\draw[edge] (u3)--(v11);	
	\draw[edge] (v11)--(w3);
	\draw[edge] (u2)--(v00);
	\draw[edge] (v00)--(w2);
		\draw[edge] (v10)--(w1);
	\foreach \u/\v in {v00/v10,v10/v11,v11/v01,v01/v00}
	\draw[edge] (\u)--(\v);
\end{tikzpicture}
\begin{tikzpicture}[%
	scale=0.85,
	vertex/.style={draw,fill,circle,inner sep=1pt},%
	edge/.style={draw,thick}]
	\node[vertex, label= left:{$v_2$}] (v00) {};
	\node[vertex, label=right:{$v_3$}] (v10) at ($(v00)+(1,0)$) {};
	\node[vertex, label=left:{$v_1$}] (v01) at ($(v00)+(0,1)$) {};
	\node[vertex, label= right:{$v_4$}] (v11) at ($(v10)+(0,1)$) {};
	\node[vertex, label=left:{$u$}] (u) at ($0.5*(v00)+0.5*(v01)+(-2,0)+(0.5,0)$) {};
	\node[vertex, label=right:{$w$}] (w) at ($0.5*(v11)+0.5*(v10)+(2,0)-(0.5,0)$) {};
	\node[vertex, label=above:{$u_1$}] (u1) at ($(v01)+(-0.5,0.5)$) {};
	\node[vertex, label=below:{$u_2$}] (u2) at ($(v00)+(-0.5,-0.5)$) {};
	\node[vertex, label=above:{$w_1$}] (w1) at ($(v11)+(0.5,0.5)$) {};
	\node[vertex, label=below:{$w_2$}] (w2) at ($(v10)+(0.5,-0.5)$) {};
	\draw[edge,dashed] (u)--(u1);
	\draw[edge, dashed] (u)--(u2);
	\draw[edge, dashed] (w)--(w1);
	\draw[edge, dashed] (w)--(w2);
	\draw[edge] (u1)--(v01);
	\draw[edge] (v01)--(w1);
	\draw[edge] (u2)--(v00);
	\draw[edge] (v00)--(w2);
	\foreach \u/\v in {v00/v10,v10/v11,v11/v01,v01/v00}
	\draw[edge] (\u)--(\v);
\end{tikzpicture}
\caption{Illustrations for the vertex namings used in proofs, from left to right: \Cref{lemma:doublehouse}, \Cref{lemma:3-vertex-separator} and \Cref{lemma:contraction-connectivity}}\label{fig::prooffigs}
\end{figure}
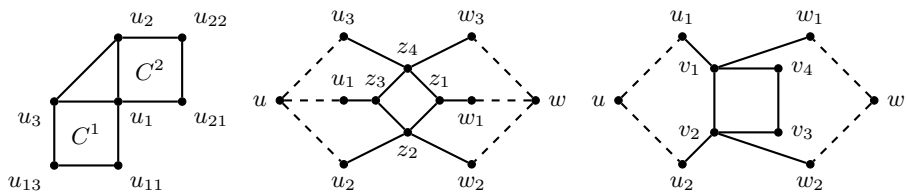

The following lemma captures the essence of why the algorithm provided in this section cannot be applied also on HHD-free graphs, since it holds for \ourclass graphs but not for HHD-free graphs.
Think for example of a simple path $P$ of length $5$ and a vertex disjoint induced chordless $C_4$, $C$. 
Consider also each vertex of $P$ being universal to $C$. Observe that this graph is HHD- but not \ourclass.
Every two non adjacent vertices of $C$ together with the endpoints of $P$ create an induced chordless $C_4$.
Adding a chord connecting the two endpoints of $P$ creates a hole and hence the resulting graph is not HHD-free.
% (the size of the cycle created in this counterexample changes depending on the length of $P$).

\begin{lemma}\label{lemma:add-chord-to-cycle}
	Let $G$ be a \ourclass graph and $C=\{v_1,v_2,v_3,v_4\}$ an induced $C_4$ in $G$. Then the graph $G'$ created by adding the chord $v_1v_3$ to $G$ is  \ourclass and has one less induced $C_4$ than $G$.
\end{lemma}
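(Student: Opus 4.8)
The plan is to prove the two assertions separately: first that $G'$ contains exactly one fewer induced $C_4$ than $G$, and second that $G'$ is still \ourclass, i.e., it contains no hole, no house, and no two distinct induced $C_4$ sharing more than one vertex.

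For the count of induced $C_4$'s, first I would observe that adding the chord $v_1v_3$ destroys the cycle $C$ as an induced $C_4$ but creates no new induced $C_4$ and destroys no other induced $C_4$. An induced $C_4$ of $G'$ that is not an induced $C_4$ of $G$ must use the new edge $v_1v_3$; so it would be a set $\{v_1,v_3,x,y\}$ inducing a $C_4$ in $G'$ with $v_1v_3$ an edge of this $C_4$. But in a $C_4$, every edge is between two \emph{consecutive} vertices, so $v_1v_3$ being an edge of the $C_4$ forces $v_1$ and $v_3$ to be consecutive; then the $C_4$ is $v_1 v_3 x y v_1$ with $x \in N(v_3)$, $y \in N(v_1)$, $xy \in E$, and $v_1 x, v_3 y \notin E$. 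Since the only new edge is $v_1 v_3$, this same vertex set $\{v_1,v_3,x,y\}$ already induces (in $G$) a graph with edges $v_3 x, x y, y v_1$ and non-edges $v_1 x, v_3 y$; together with the question of whether $v_1 v_3 \in E(G)$, but $v_1v_3 \notin E(G)$ since $C$ was induced — so $\{v_1,v_3,x,y\}$ induces a $C_4$ in $G$ already, contradiction. Hence no new induced $C_4$ appears. Conversely, any induced $C_4$ of $G$ other than $C$ itself does not contain both $v_1$ and $v_3$ as a non-adjacent pair that becomes adjacent... more carefully: an induced $C_4$ of $G$ remains induced in $G'$ unless adding $v_1v_3$ creates a chord in it, which happens only if it contains both $v_1$ and $v_3$ as non-consecutive vertices — i.e., it is a $C_4$ on $\{v_1, a, v_3, b\}$ with $v_1, v_3$ opposite. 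I would then use \Cref{lemma:observation}: in such a $C_4$, if there were a vertex $a$ adjacent to both $v_1$ and $v_3$ but to neither $v_2$ nor $v_4$ (which must hold since it is a distinct induced $C_4$ disjoint-ish from $C$), we get a contradiction with the $\text{I}_4^2$-free property or with \Cref{lemma:observation} applied to $C$, since $a$ adjacent to the two vertices $v_1,v_3$ of the induced $C_4$ $C$ would be universal to $C$. The only subtlety is a $C_4$ sharing more vertices with $C$; but two distinct induced $C_4$ cannot share more than one vertex, so $\{v_1, a, v_3, b\}$ shares $v_1$ and $v_3$ with $C$ — two vertices — which is already forbidden. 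Hence no induced $C_4$ of $G$ other than $C$ is affected, and the count decreases by exactly one.

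For the structural part, I would assume for contradiction that $G'$ contains a forbidden structure $H$ (hole, house, or a pair of induced $C_4$ sharing $\geq 2$ vertices). Since $G$ is \ourclass, $H$ must use the new edge $v_1 v_3$. I would then split into cases by the type of $H$ and, in each, "undo" the edge $v_1v_3$ to get a corresponding structure in $G$, using \Cref{lemma:observation} and the structural lemmas already proved (\Cref{lemma:doublehouse} in particular looks tailored for this). For instance, if $H$ is a hole through $v_1 v_3$, then replacing the edge $v_1 v_3$ by a detour through $v_2$ (or $v_4$) — both of which are adjacent to $v_1$ and $v_3$ — yields either a longer hole in $G$ or a chord appears, and analysing the chords forces a house or a double house in $G$, contradicting \Cref{lemma:doublehouse} or HH-freeness. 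If $H$ is a house using $v_1v_3$, the edge $v_1v_3$ is either a roof edge or a body edge, and in each sub-case substituting $v_2$ or $v_4$ produces a house, a domino-like configuration, or two $C_4$ sharing an edge in $G$, which is impossible. If $H$ is a pair of induced $C_4$ in $G'$ sharing at least two vertices, at least one of them uses $v_1v_3$, and reducing as in the counting argument brings us back to a forbidden configuration in $G$ or to two induced $C_4$ in $G$ sharing $\geq 2$ vertices.

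The main obstacle I expect is the hole case: a hole in $G'$ passing through $v_1 v_3$ corresponds in $G$ to a cycle where $v_1 v_3$ is replaced by a path of length $2$ through $v_2$ or $v_4$, and one must rule out all the ways that $v_2$ (resp. $v_4$) could have extra chords to the rest of the hole. This is exactly where \Cref{lemma:observation} (any external vertex adjacent to two vertices of $C$ is universal to $C$, and universal vertices form a clique) should do the heavy lifting, by pinning down the adjacencies of $v_2, v_4$ to any vertex that sees two vertices of $C$; combined with the explicit small-graph analysis in \Cref{lemma:doublehouse}, this should close all cases. The other cases (house, two $C_4$) are more mechanical once the hole case machinery is in place.
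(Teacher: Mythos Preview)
Your argument that no new induced $C_4$ is created contains a genuine error. You write that a new $C_4$ in $G'$ would be $v_1 v_3 x y v_1$, and that in $G$ the set $\{v_1,v_3,x,y\}$ then carries edges $v_3x$, $xy$, $yv_1$ and non-edges $v_1x$, $v_3y$, $v_1v_3$. But this induces a $P_4$, not a $C_4$: the very edge you removed was the one closing the cycle. So there is no contradiction at that point, and your proof that adding $v_1v_3$ creates no new induced $C_4$ collapses.

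The correct argument (which, ironically, is the one you sketch for the later ``hole'' case) is to route through $v_2$: the vertices $v_1, v_2, v_3, x, y$ form a $5$-cycle in $G$, and hole-freeness forces a chord. The only candidate chords are from $x$ or $y$ to $v_2$ (the pairs $v_1x$, $v_3y$, $v_1v_3$ are non-edges by hypothesis), so one of $x,y$ sees two vertices of $C$ and hence, by \Cref{lemma:observation}, is universal to $C$; this produces a chord in the supposed new $C_4$, the desired contradiction. You should run exactly this argument up front rather than defer it.

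Two smaller remarks. Your argument that no \emph{other} induced $C_4$ of $G$ is destroyed is correct and clean: any such $C_4$ would contain $v_1$ and $v_3$ as opposite vertices and thus share two vertices with $C$, violating $\text{I}_4^2$-freeness. And once ``no new $C_4$'' is established, the house case simplifies: $v_1v_3$ must be a roof edge (it lies in no induced $C_4$ of $G'$), and repeated application of \Cref{lemma:observation} to the roof apex and the body vertices finishes it. You will not need \Cref{lemma:doublehouse} anywhere in this lemma.
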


An essential property of the graph class we work on is being closed under contraction, since our algorithm is based on contracting edges iteratively until the resulting graph becomes chordal.
Before proving this property though, although  ``after an edge contraction a new cycle is created'' is intuitively clear, we formally define what it means for a $C_4$ to be ``new''.
\begin{definition}
	Let $G$ be a graph, $uv\in E(G)$ and $G'=G/uv$. Let also $w$ be the vertex of $G'$ that is created by the contraction of $uv$. We say that an induced cycle $C$ containing $w$ in $G'$ is new if $N_C(w)\not\subseteq N_G(v)$ and $N_C(w)\not\subseteq N_G(u)$.
\end{definition}

\begin{lemma}\label{lemma:closed-under-contraction}
	\ourclass graphs are closed under contraction of an edge of an induced $C_4$.
\end{lemma}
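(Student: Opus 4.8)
The plan is to show that if $G$ is \ourclass and $G' = G/uv$ where $uv$ is an edge of an induced $C_4$ in $G$, then $G'$ is again \ourclass, i.e.\ $G'$ is $\text{I}_4^2$-free and contains no hole or house as an induced subgraph. Let $w$ denote the contracted vertex. I would first handle the ``easy'' case: any forbidden induced subgraph $H$ of $G'$ that does not contain $w$ is also an induced subgraph of $G$, contradicting that $G$ is \ourclass. So in all cases below I may assume the offending structure passes through $w$, and moreover I may assume it is \emph{new} in the sense of the preceding definition (if $N_H(w) \subseteq N_G(u)$, then replacing $w$ by $u$ gives the same induced structure in $G$; similarly for $v$). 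Thus $w$ has at least one neighbour in $H$ that was exclusively a $G$-neighbour of $u$ and at least one that was exclusively a $G$-neighbour of $v$.

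\textbf{Key steps.} First, holes: suppose $G'$ contains an induced hole $H$ (chordless cycle of length $\geq 5$) through $w$, with $H$-neighbours $x$ (a former $u$-neighbour only) and $y$ (a former $v$-neighbour only). Then in $G$, replacing $w$ by the edge $uv$ turns $H$ into an induced chordless cycle of length $\geq 6$ through $u$ and $v$ — a hole in $G$ — unless some chord appears, but a chord in $G$ among the vertices of $H \setminus \{w\} \cup \{u,v\}$ would either be a chord of $H$ in $G'$ (impossible) or be an edge incident to $u$ or $v$ that is absorbed into $w$; one checks the only such edges ($ux'$ or $vy'$ for $H$-vertices $x',y'$) would already be edges incident to $w$ in $G'$, hence chords of $H$. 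So $G$ has a hole, contradiction. Second, houses: suppose $G'$ has an induced house through $w$. Here I split on whether $w$ lies on the body $C_4$ or is the roof apex. If $w$ is the apex, its two $H$-neighbours are adjacent vertices of the body $C_4$; if these come from different sides ($u$ and $v$ respectively), I expand $w$ back into the edge $uv$ and argue the body $C_4$ together with $u,v$ forms a double house or a domino-like configuration in $G$, and invoke Lemma~\ref{lemma:doublehouse} together with the \ourclass assumption (no two induced $C_4$ sharing more than one vertex, no house) to derive a contradiction. If $w$ lies on the body, expanding $w$ produces an induced $C_5$ (or a $C_4$ plus pendant structure) in $G$ that, combined with the roof vertex, yields a house or a hole in $G$. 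The most delicate sub-case is when $N_H(w)$ splits across $u$ and $v$ with the two sides each contributing exactly one neighbour: then $G$ contains an induced $C_5$ on $\{u, v\} \cup (H\text{-vertices})$, which together with appropriate remaining vertices of $H$ must be ruled out using the hole-freeness just established and Lemma~\ref{lemma:observation}.

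\textbf{The $\text{I}_4^2$ part.} Finally, suppose $G'$ contains two distinct induced $C_4$'s, say $C$ and $D$, sharing at least two vertices. If neither contains $w$, both are in $G$, contradiction; if exactly one, say $C$, contains $w$, then $D \subseteq V(G)$ is an induced $C_4$ in $G$, and I expand $w$ in $C$: replacing $w$ by $u$ or by $v$ (whichever preserves adjacency to $C$'s two neighbours of $w$) yields an induced $C_4$ in $G$ sharing $\geq 2$ vertices with $D$ — one has to check that the shared vertices are not themselves $w$, but $w \notin D$ so this is automatic — contradicting \ourclass. The genuinely new case is when \emph{both} $C$ and $D$ contain $w$; then $C$ and $D$ share $w$ and at least one other vertex. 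If the $H$-neighbourhood of $w$ in $C$ (resp.\ $D$) lies entirely within $N_G(u)$ or entirely within $N_G(v)$, I can replace $w$ consistently and land back in $G$. So I may assume $w$ has mixed neighbours in at least one of them, and then I would use Lemma~\ref{lemma:observation} on the ambient induced $C_4$ structure in $G$: a vertex adjacent to two vertices of an induced $C_4$ is universal to it, forcing a large clique, which is incompatible with the vertices of $C,D$ inducing the required non-edges. I expect \textbf{this last case — two new induced $C_4$'s both through $w$ with mixed neighbourhoods — to be the main obstacle}, since it is where the full strength of the $\text{I}_4^2$-free hypothesis (rather than mere HHD-freeness) must be invoked, and it requires a careful case analysis on how the four non-$w$ vertices of $C \cup D$ attach to $u$ and $v$; I would organize it by the size of $V(C) \cap V(D)$ (either $2$ or $3$) and, within that, by the partition of $w$'s neighbours between $u$ and $v$, repeatedly using Lemma~\ref{lemma:observation} and Lemma~\ref{lemma:add-chord-to-cycle} to collapse configurations to previously excluded ones.
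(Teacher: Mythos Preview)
Your overall structure is sound, but you are missing one short lemma that the paper isolates first and that dissolves your ``main obstacle'' entirely: \emph{the contraction creates no new induced $C_4$ at all}. If $C'=(w,x,y,z)$ were a new induced $C_4$ in $G'$ with $x\in N_G(u)\setminus N_G(v)$ and $z\in N_G(v)\setminus N_G(u)$, then $u,x,y,z,v$ is an induced $5$-cycle in $G$ --- a hole --- by exactly the expansion argument you already use (any would-be chord turns into a chord of $C'$ in $G'$). The paper runs this in two small cases, depending on whether the original $C_4$ contributes a further vertex to $C'$. Once you know every induced $C_4$ through $w$ in $G'$ is ``old'' and hence unfolds to an induced $C_4$ in $G$ by replacing $w$ with one of $u,v$, your $\text{I}_4^2$-analysis shortens dramatically, and your ``main obstacle'' --- two new $C_4$'s through $w$ with mixed neighbourhoods --- is vacuous.

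Without this observation your plan already has a gap in the ``exactly one of $C,D$ contains $w$'' case: the parenthetical ``whichever preserves adjacency to $C$'s two neighbours of $w$'' presupposes that such a replacement exists, which fails precisely when $C$ is new. Two further remarks on economy: (i)~the paper bypasses your direct hole/house analysis altogether by citing the known fact that HHD-free graphs (a superclass of \ourclass graphs) are closed under arbitrary edge contraction; (ii)~even after you know both offending $C_4$'s are old, the final step is not a one-liner --- when the two unfoldings use \emph{different} endpoints ($u$ for one, $v$ for the other), the resulting $C_4$'s in $G$ a priori share only one vertex, and the paper closes this remaining case with a short house argument together with Lemma~\ref{lemma:observation}.
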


In order to prove that the contractions of our algorithm do not affect the connectivity, we first study the possible role of vertices on an induced $C_4$ in minimal separators in \ourclass graphs.
% For the following result, we use \Cref{lemma:add-chord-to-cycle} to turn an \ourclass graph into a chordal graph to exploit that chordality implies that minimal separators induce a clique.

\begin{lemma}\label{lemma:3-vertex-separator}
	Let $G$ be a $k$-connected \ourclass graph for $k\geq 5$.
	Then no three vertices of an induced $C_4$ belong in the same minimal separator.
\end{lemma}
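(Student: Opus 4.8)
\textbf{Proof plan for Lemma~\ref{lemma:3-vertex-separator}.}
The plan is to argue by contradiction: suppose $C=\{v_1,v_2,v_3,v_4\}$ is an induced $C_4$ (with edges $v_1v_2,v_2v_3,v_3v_4,v_4v_1$) in a $k$-connected \ourclass graph $G$ with $k\ge 5$, and suppose $S$ is a minimal separator with $\{v_1,v_2,v_3\}\subseteq S$. Since $|S|\ge k\ge 5$, there is at least one more vertex in $S$, and since $S$ is a separator there are at least two connected components $A,B$ of $G\setminus S$. The vertex $v_4$ lies in one of these components, say $v_4\in A$, so $v_4$ has no neighbour in $B$. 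The key structural tool is \Cref{lemma:observation}: any vertex adjacent to two vertices of $C$ is universal to $C$, and the universal vertices of $C$ form a clique.

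First I would exploit minimality of $S$. Because $S$ is a \emph{minimal} separator, no proper subset of smaller size separates $G$; in particular, removing $\{v_1,v_2,v_3\}$ from $S$ should not suffice — but more usefully, each vertex of $S$ must have a neighbour in \emph{both} $A$ and $B$ (otherwise it could be deleted from $S$ and the two sides would still be separated by the remaining vertices, contradicting minimality via the standard argument that in a minimal separator every vertex has a neighbour in each of the two "full" components). Apply this to $v_1,v_2,v_3$: each of them has a neighbour in $B$. Let $b_1\in B$ be a neighbour of $v_1$ and $b_2\in B$ a neighbour of $v_2$; since $B$ is connected, there is a path in $B$ between $b_1$ and $b_2$, giving a path $P$ from $v_1$ to $v_2$ whose internal vertices all lie in $B$. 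Now $v_1 v_2\in E(G)$, so $\{v_1,v_2\}\cup V(P)$ contains a chordless cycle through $v_1,v_2$; I would take a shortest such structure and analyse the vertex of $B$ adjacent to $v_1$ (resp. $v_2$) on it. The goal is to produce either a house, a hole, or a second induced $C_4$ sharing two vertices with $C$ — each forbidden.

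The heart of the argument is the following local picture. Consider a neighbour $b\in B$ of $v_2$. Then $b$ is not adjacent to $v_1$ or $v_3$ (they lie in $S$ and — wait, they could be; but) — more carefully: if $b$ were adjacent to two vertices of $C$ it would be universal to $C$ by \Cref{lemma:observation}, hence adjacent to $v_4\in A$, impossible since $b\in B$. So each vertex of $B$ is adjacent to \emph{at most one} vertex of $\{v_1,v_2,v_3,v_4\}$, and in fact to at most one of $v_1,v_2,v_3$. Now take $b_1\in B$ adjacent to $v_1$ and $b_3\in B$ adjacent to $v_3$ with a shortest $b_1$–$b_3$ path $Q$ inside $B$; none of $v_1,v_2,v_3,v_4$ lies on $Q$. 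Then $v_1,(Q),v_3,v_2,v_1$ and $v_1,(Q),v_3,v_4,v_1$ are two cycles; since $v_1v_3\notin E(G)$, the set $\{v_1\}\cup V(Q)\cup\{v_3\}$ together with $v_2$ (and separately with $v_4$) spans cycles of length $\ge 4$. Because $G$ has no hole, these cycles must have chords; chasing the chords using \Cref{lemma:observation} (no vertex of $Q$ sees two of $C$) forces the internal vertices of $Q$ to be few, and one then extracts a house (roof at $v_2$ or $v_4$ over a $C_4$ formed with the endpoints of $Q$) or a forbidden second $C_4$ through two vertices of $C$.

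The step I expect to be the main obstacle is this last case analysis on the path $Q$ through $B$: controlling exactly how the no-hole, no-house, and \ourclass ($\text{I}_4^2$-free) conditions interact to bound the length of $Q$ and pin down the chord structure. In particular one must handle the subcase where $Q$ has length exactly $2$ separately (its midpoint could be adjacent to $v_1$ and $v_3$? no — that would make it universal to $C$ by \Cref{lemma:observation}, contradiction, so this is actually the clean subcase) versus the subcase where $Q$ is longer and a hole appears directly. I would organise the argument so that every configuration yields one of: a vertex universal to $C$ lying in $B$ (contradicting $v_4\in A$), a hole, a house, or two induced $C_4$'s sharing an edge or two vertices — each contradicting that $G$ is \ourclass — thereby contradicting the existence of $S$ and completing the proof. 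The hypothesis $k\ge 5$ is used precisely to guarantee $|S|\ge 5$, so that $S\setminus\{v_1,v_2,v_3\}$ is nonempty and $B$ is "large enough" for the connectivity/minimality arguments (in particular that $B$ is a genuine component with the full-component property) to bite.
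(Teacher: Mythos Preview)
Your approach is correct and genuinely different from the paper's. The paper argues globally: it chordalises $G$ to a graph $G'$ by adding one diagonal to every induced $C_4$ (via \Cref{lemma:add-chord-to-cycle}, choosing $v_2v_4$ for $C$), observes that in the chordal graph $G'$ a minimum separator must be a clique and hence cannot contain $\{v_1,v_2,v_3\}$, deduces that $G'\setminus S$ is connected, and then takes a $u$--$w$ path in $G'\setminus S$ using the fewest added diagonals, showing by a rerouting argument through two $S$-paths that each added diagonal can be eliminated. Your argument is local and more elementary: you never leave $G$, and you use only \Cref{lemma:observation} plus a short path-through-$B$ analysis. What the paper's route buys is that it reuses the chordalisation machinery developed elsewhere; what yours buys is directness and a self-contained case split.

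Two remarks on your plan. First, the case analysis you flag as the ``main obstacle'' is in fact clean. Take a shortest (hence induced) path $v_1,p_1,\dots,p_{m-1},v_3$ with all $p_i\in B$. Since every vertex of $B$ sees at most one vertex of $C$, we get $m\ge 3$ (else $p_1$ would be universal to $C$), and then the cycle $v_1,p_1,\dots,p_{m-1},v_3,v_4$ is chordless: the $p_i$ see nothing in $A$, the path is induced, and $v_1v_3\notin E$. For $m\ge 3$ this is a hole, done. (Your alternative cycle through $v_2$ also works with a one-line split on the smallest $i$ with $v_2p_i\in E$: $i\ge 3$ yields a hole, $i=2$ yields an induced $C_4$ on $\{v_1,p_1,p_2,v_2\}$ sharing $v_1,v_2$ with $C$.)

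Second, you tacitly assume $v_4\notin S$. If all four vertices of $C$ lie in $S$, then by \Cref{lemma:observation} the vertices universal to $C$ form a clique and hence cannot meet both $A$ and $B$; pick the component, say $B$, containing no universal vertex and run the same argument using the $v_2$-cycle. Finally, your explanation of where $k\ge 5$ enters is not quite right: your argument uses only that $S$ is a minimum separator containing $v_1,v_2,v_3$ (so that each $v_i$ has a neighbour in every full component), not that $|S|\ge 5$.
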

\begin{proof}
	Let $G$ be a $k$-connected \ourclass graph for $k\geq 5$ and $v_1,v_2,v_3,v_4$ vertices that induce a $C_4$, $C$.
	Assume that $v_1,v_2,v_3$ belong in the a same minimal separator $S$ and hence, $(G\setminus\{v_1,v_2,v_3\})$ is only $k-3$ connected.
	Let also $u$ and $w$ be two distinct vertices belonging in different connected components of $G\setminus S$.
	
	Consider now the chordal graph $G'$ created, by adding $v_2v_4$ to $C$ and one chord to each other induced $C_4$ of $G$. By \Cref{lemma:add-chord-to-cycle} this is possible by adding exactly one chord to each induced $C_4$ of $G$ - in particular each addition does not create new induced $C_4$.
	Since $G'$ is chordal each minimal separator induces a clique, and hence $v_1,v_2,v_3$ cannot be part of the same minimal separator in $G'$ because they do not induce a triangle in $G'$. Thus $G'\setminus S$ remains connected.
	
	Let $P_1$ be a $u-w$ path  in $G'\setminus S$  that contains a minimal number of added edges. Let  $z_1z_3\in E(P_1)$ be one of the added edges, such that $z_3$ is closer to $u$ on $P_1$ than $z_1$.
	Note that 
	%$z_1z_3$ was added, because
	$z_1$ and $z_3$ are part of some induced $C_4$, $C'=\{z_1,z_2,z_3,z_4\}$ in $G$.
	Since $z_1z_3$ cannot be replaced by neither $z_1z_2,z_2z_3$, nor $z_1z_4,z_4z_3$ (otherwise we get a path with strictly less added edges than $P_1$) it follows that $z_2,z_4\in S$. 
	
	We will use the $u-w$ paths through $S$ in $G$  to reach a contradiction.
	Since $S$ is a minimal $u-w$ separator in $G$, there are two internally vertex disjoint  $u-w$ paths $P_2$ and $P_3$, with $P_2\cap S=\{z_2\}$ and $P_3\cap S=\{z_4\}$. 
 %Assume that $P_2$ and $P_3$ are induced paths.
	% (otherwise shortcut them to obtain such induced paths). 
	Let $w_2$ be the neighbor of $z_2$ on $P_2$ closer to $w$, $w_1$ the respective neighbor of $z_1$ on $P_1$ and $w_3$ the respective neighbor of $z_4$ on $P_3$. Let also $u_1$, $u_2$, $u_3$ be the corresponding neighbors of these paths closer to $u$. 
	See the illustration 
	%in the middle 
	in Figure~\ref{fig::prooffigs} for these namings, keeping in mind that it could be $w_1\in\{w_2,w_3\}$ or $u_1\in\{u_2,u_3\}$ or also $w_1=w_2=w_3=w$ or $u_1=u_2=u_3=u$.

%First, observe that if $w_1z_3\in E(G)$ then,		$P_1'=P_1^{[u,u_1]}w_1P_1^{[w_1,w]}$ is a $u-w$ path in $G'\setminus S$ that has fewer added edges than $P_1$ (the added edge $z_1z_3$ is replaced by the original edge $w_1z_3$), which is a contradiction to the choice of $P_1$. Using similar arguments we can also show that $u_1z_1\in E(G)$ also leads  to a contradiction. Thus we can assume $w_1z_3, u_1z_1\notin E(G)$.
%Now consider the graph $G''=G'\setminus\{z_1z_3\}$ which is  \ourclass by \Cref{lemma:add-chord-to-cycle} (simply add $\{z_1z_3\}$ last with this lemma to create $G'$). We claim that in $G''$ (and thus also in $G'$) vertex $z_3$ is adjacent to a vertex on $P_2^{[w_2,w]}$ or $P_3^{[w_3,w]}$.

We claim that, in $G$, $z_3$ is adjacent to a vertex on $P_2^{[w_2,w]}$ or $P_3^{[w_3,w]}$.
Assume otherwise, and assume that $P_2^{[w_2,w]}$, $P_3^{[w_3,w]}$ are induced paths in $G$ (shortcut them otherwise). If $w_2=w_3$ then notice that $w_2=w=w_3$. In order for $z_3,z_2,z_4,w$ not to induce a $C_4$  with three common vertices to $C$, $z_3$ has to be adjacent to $w$ which is on $P_2^{[w_2,w]}$. 
If $w_2\neq w_3$ then assume without loss of generality that $w_2\neq w$. In order to not be a hole, there has to be a chord in the cycle build by $P_2^{[w_2,w]}$, $P_3^{[w_3,w]}$ with $z_4,z_3,z_2$. By assumption, this chord cannot be from $z_3$, so it has to involve $z_4$ or $z_2$. Since $P_2^{[w_2,w]}$  and $P_3^{[w_3,w]}$ are induced and $w_2\neq w$, either $w_2$ is adjacent to $z_4$, or $w_3\neq w$ is adjacent to $z_2$. Both cases create a $C_4$ that has three vertices in common with $C$,  ($w_2,z_2,z_3,z_4$, and $w_3,z_2,z_3,z_4$, resp.) and since $z_4z_2\notin E(G)$, the added chord for these $C_4$ has to be $w_2z_3$, resp.~$w_3z_3$, leading again to $z_3$ being adjacent to some vertex on $P_2^{[w_2,w]}$ or $P_3^{[w_3,w]}$.

 Thus we conclude that $z_3$ is adjacent to a vertex $x$ on $P_2^{[w_2,w]}$ or $P_3^{[w_3,w]}$ in $G$. This however allows to create a path from $u$ to $w$ with (at least) one added edge less than $P_1$ in $G$ (since $P_2, P_3$ do not contain any added edges). Specifically, if $x$ is on $P_2$ we get  $P_1'=P_1^{[u,z_3]}xP_2^{[x,w]}$ and if $x\in P_3$,  $P_1'=P_1^{[u,z_3]}xP_3^{[x,w]}$.

Since $C'$ was an arbitrary cycle we conclude 
%through the addition of these edges to $G$ to make it chordal 
that $v_1,v_2,v_3$ cannot be part of the same minimal separator in $G$.
\end{proof}

%Now we have all ingredients to show that the types of contractions we use do not affect the connectivity of the graph.
\begin{lemma}\label{lemma:contraction-connectivity}
	Let $G$ be an \ourclass $k$-connected graph and $C=\{v_1,v_2,v_3,v_4\}$ be an induced $C_4$. The graph $G'=G/v_1v_2$ is still $k$-connected.
\end{lemma}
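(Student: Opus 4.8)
\noindent\emph{Plan.} I would argue by contradiction: assuming $G'=G/v_1v_2$ is not $k$-connected, I will extract from a small separator of $G'$ a forbidden configuration of \ourclass graphs — concretely, either a hole or two distinct induced $C_4$'s sharing two vertices. Throughout, let $z$ denote the vertex of $G'$ obtained by contracting the edge $v_1v_2$.

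\medskip\noindent\emph{Step 1: reduce to a convenient separator of $G$.} Since $G$ contains an induced $C_4$ it is not complete, and as the only $k$-connected graph on $k+1$ vertices is $K_{k+1}$, we get $|V(G)|\ge k+2$, hence $|V(G')|\ge k+1$. So if $G'$ is not $k$-connected there is $S'\subseteq V(G')$ with $|S'|\le k-1$ and $G'-S'$ disconnected. If $z\notin S'$, then $S'$ is also a vertex set of $G$ avoiding $v_1,v_2$ of size $\le k-1$, so $G-S'$ is connected; but $G'-S'=(G-S')/v_1v_2$ is then the contraction of an edge in a connected graph, hence connected — a contradiction. So $z\in S'$, and I set $S:=(S'\setminus\{z\})\cup\{v_1,v_2\}\subseteq V(G)$. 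Now $G-S\cong G'-S'$ is disconnected, while $|S|=|S'|+1\le k$; $k$-connectivity of $G$ then forces $|S|=k$, so $S$ is a minimal separator of $G$ that contains both $v_1$ and $v_2$.

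\medskip\noindent\emph{Step 2: locate $C$ and build the bad configuration.} By \Cref{lemma:3-vertex-separator} no three vertices of $C$ lie in the minimal separator $S$ (this uses $k\ge 5$, which we may assume since for $k\le 4$ exact GL-partitions can be computed by the constructions cited in the introduction); as $v_1,v_2\in S$ this gives $v_3,v_4\notin S$, and since $v_3v_4\in E(G)$ these two vertices lie in a common component $A$ of $G-S$. Fix any other component $B$ of $G-S$. Because $S$ is a minimum vertex cut, every vertex of $S$ — in particular $v_1$ and $v_2$ — has a neighbour in $B$ (otherwise deleting that vertex from $S$ would still separate $B$, contradicting minimality of $|S|$). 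Moreover no $b\in B$ is adjacent to two vertices of $C$: by \Cref{lemma:observation} such a $b$ would be universal to $C$, hence adjacent to $v_3\in A$, impossible across components of $G-S$. Thus $N(v_1)\cap B$ and $N(v_2)\cap B$ are nonempty and disjoint. Using connectivity of $B$, pick a \emph{shortest} path $P=(y_0,\dots,y_\ell)$ inside $G[B]$ with $y_0\in N(v_1)\cap B$ and $y_\ell\in N(v_2)\cap B$; minimality yields $\ell\ge 1$, that $P$ is induced, that $v_1\not\sim y_i$ for $i\ge 1$, and that $v_2\not\sim y_i$ for $i\le \ell-1$. Then the cycle $D:=v_1\,y_0\,y_1\cdots y_\ell\,v_2\,v_1$ has length $\ell+3$ and, by the above facts together with $v_1v_2\in E(G)$, has no chord — it is an induced cycle. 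If $\ell\ge 2$ it is a hole, contradicting that $G$ is \ourclass. If $\ell=1$ then $D$ is an induced $C_4$ on $\{v_1,v_2,y_0,y_1\}$ that shares the two vertices $v_1,v_2$ with $C$ and is distinct from $C$ (since $y_0,y_1\in B$ but $v_3,v_4\in A$), contradicting that \ourclass graphs contain no two distinct induced $C_4$'s sharing more than one vertex. Either way we reach a contradiction, so $G'$ is $k$-connected.

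\medskip\noindent\emph{Main obstacle.} I expect the delicate part to be Step 2: the path $P$ must be taken as a shortest path \emph{between the two neighbourhood sets} $N(v_1)\cap B$ and $N(v_2)\cap B$ inside $B$ (rather than simply between $v_1$ and $v_2$) precisely so that the cycle $D$ through the edge $v_1v_2$ comes out induced, and one then has to recognise that its only two possible shapes — a chordless cycle of length $\ge 5$, or a $C_4$ overlapping $C$ in the edge $v_1v_2$ — are exactly the two patterns excluded by the definition of \ourclass. The other, milder subtlety is the bookkeeping of Step 1 that turns an arbitrary small separator of $G'$ into a minimum separator of $G$ containing both $v_1$ and $v_2$, which is what makes \Cref{lemma:3-vertex-separator} and \Cref{lemma:observation} applicable.
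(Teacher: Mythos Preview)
Your argument is correct and takes a genuinely different route from the paper's proof. The paper proceeds via a Menger-style path-rerouting: it picks vertices $u,w$ that become separated in $G'$, finds two internally disjoint $u$--$w$ paths in $G$ passing through $v_1$ and $v_2$ respectively, and then uses the hole/house/domino-free structure to argue that the neighbours of $v_1,v_2$ on these paths must be adjacent to $v_3$ or $v_4$; an explicit eight-case analysis then rebuilds two disjoint $u$--$w$ paths in $G'$ using $v_3,v_4$ as detours, with \Cref{lemma:3-vertex-separator} guaranteeing that $v_3,v_4$ are available.

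Your proof instead extracts from the hypothetical small separator of $G'$ a minimum separator $S$ of $G$ containing both $v_1$ and $v_2$, places $v_3,v_4$ in a component $A$ via \Cref{lemma:3-vertex-separator}, and then uses \Cref{lemma:observation} together with a shortest path inside a different component $B$ to produce directly a forbidden induced subgraph (a hole if the path has length $\ge 2$, or a second induced $C_4$ through $v_1,v_2$ if it has length $1$). This is cleaner: it avoids the case analysis entirely and goes straight to the forbidden configuration, at the cost of being non-constructive (you do not actually exhibit the replacement paths, only a contradiction). Both proofs share the same reliance on \Cref{lemma:3-vertex-separator} and hence the same implicit $k\ge 5$ hypothesis, which you handle exactly as the paper does.
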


Now, we finally  look specifically at Algorithm~\ref{alg:: ourclassGL}, and first show that its subroutine creating $G''$ works correctly.
\begin{lemma}\label{lemma:existence of three vertices}
	Given an \ourclass graph $G$, the vertices selected in line 6 of Algorithm~\ref{alg:: ourclassGL} indeed exist as long as an induced $C_4$ exists.
\end{lemma}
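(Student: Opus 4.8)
Line 6 of Algorithm~\ref{alg:: ourclassGL} asks: given the current set $\C$ of remaining induced $C_4$'s (which is nonempty), find three vertices $v_1,v_2,v_3$ and a cycle $C\in\C$ through all three of them such that no other cycle $C'\in\C$ contains any of $v_1,v_2,v_3$. I need to show such a configuration always exists. The natural tool is the treelike structure of the $C_4$'s — but that's stated as Lemma~\ref{lemma:C4-tree}, whose proof is deferred, and in any case I should try to argue this more directly.

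The plan is to reduce the feasibility of line~6 to a short counting argument. It suffices to exhibit, in the current set $\C$, an induced $C_4$ $C$ at least three of whose vertices lie in no other member of $\C$: taking $v_1,v_2,v_3$ to be three such vertices of $C$ then meets the requirement of line~6, since each $v_i$ lies in $C$ and in no other cycle of $\C$, so $V(C')\cap\{v_1,v_2,v_3\}=\varnothing$ for every $C'\in\C\setminus\{C\}$. These $v_i$ are moreover still present in $G'$: when any earlier cycle was deleted, every vertex of a still-surviving cycle of $\C$ lay in a cycle other than the deleted one, hence was never one of the three vertices removed from $G'$ at that step.

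Call a vertex \emph{shared} if it lies in at least two cycles of the current $\C$ and let $W$ be the set of shared vertices; it is enough to find some $C\in\C$ with at most one shared vertex. If $W=\varnothing$ any $C\in\C$ works; so assume $W\neq\varnothing$ and, for contradiction, that every cycle of $\C$ has at least two shared vertices. Consider the bipartite incidence graph $B$ with parts $\C$ and $W$, joining $C$ to $w$ precisely when $w\in V(C)$. I claim $B$ is a forest: a $4$-cycle in $B$ would force two distinct cycles of $\C$ to share two vertices, contradicting $\text{I}_4^2$-freeness, while longer cycles in $B$ are ruled out by the treelike arrangement of the induced $C_4$'s of $G$ provided by \Cref{lemma:C4-tree} (restricting from the full set of induced $C_4$'s to the current $\C$ only removes cycles, and a vertex shared in $\C$ is shared in the full set, so $B$ remains an induced subgraph of a forest at every iteration).

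A degree count now gives the contradiction. Every $w\in W$ satisfies $\deg_B(w)\geq 2$ by definition, and under our hypothesis every $C\in\C$ satisfies $\deg_B(C)\geq 2$, so $|E(B)|=\sum_{w\in W}\deg_B(w)\geq 2|W|$ and $|E(B)|=\sum_{C\in\C}\deg_B(C)\geq 2|\C|$. But a forest on the $|\C|+|W|$ vertices of $B$ has at most $|\C|+|W|-1$ edges, so $2|W|\leq|\C|+|W|-1$ and $2|\C|\leq|\C|+|W|-1$, i.e.\ $|W|\leq|\C|-1$ and $|\C|\leq|W|-1$ --- impossible. Hence some $C\in\C$ has at most one shared vertex, so line~6 can be carried out whenever $\C\neq\varnothing$, as required.

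The crux is the forest property of $B$: that the induced $C_4$'s of an \ourclass graph cannot be cyclically chained through shared vertices. This is exactly what \Cref{lemma:C4-tree} supplies, and it genuinely needs house- and hole-freeness --- by the example discussed just before \Cref{lemma:add-chord-to-cycle}, $\text{I}_4^2$-freeness alone does not prevent such chaining. Everything else here is the elementary leaf/edge-count argument plus the bookkeeping that the incidence structure only shrinks as the algorithm runs.
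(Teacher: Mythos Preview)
Your argument has a genuine circularity. You invoke \Cref{lemma:C4-tree} to supply the ``treelike arrangement of the induced $C_4$'s,'' but \Cref{lemma:C4-tree} does not assert any such structure: its statement is that lines~1--10 of Algorithm~\ref{alg:: ourclassGL} produce a chordal $G''$ with the right connectivity and weight bound. Worse, the \emph{proof} of \Cref{lemma:C4-tree} explicitly cites the present lemma (``Consider now the tree $T$ constructed by $\mathcal C$ in \Cref{lemma:existence of three vertices}'') to obtain that tree. So the forest property of your incidence graph $B$ is precisely the content you are being asked to establish here, and you have not established it; you have only handled the $4$-cycle case (via $\text{I}_4^2$-freeness) and deferred the rest to a lemma that depends on this one.

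What the paper actually does for cycles of length $\geq 6$ in $B$ is the missing step: such a cycle $s_1b_1s_2b_2\cdots s_{\ell/2}b_{\ell/2}$ in the incidence graph yields a closed walk in $G$ through the shared vertices along edges of the corresponding $C_4$'s; hole-freeness forces a chord between two of the $s_i$-vertices, and that chord produces a double house on two of the $C_4$'s represented by adjacent big vertices, contradicting \Cref{lemma:doublehouse}. You even note that hole- and house-freeness are genuinely needed here --- that is exactly where they enter. Once you supply this argument, the rest of your write-up is fine: the degree count is correct (indeed, since every $w\in W$ has $\deg_B(w)\geq 2$ by definition, any leaf of the forest must lie in $\C$, which is a slightly quicker route to the same conclusion), and your bookkeeping that $B$ only shrinks across iterations and that the three chosen vertices survive in $G'$ is correct and in fact more explicit than the paper's own proof of this lemma.
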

\begin{proof}
Let $G$ be an \ourclass graph and $\mathcal C$ the set of all induced $C_4$ in $G$, consider the bipartite graph $T$ constructed through the following procedure:
	Its vertices are partitioned into two sets $B$, and $S$ referred to as \emph{big} and \emph{small} vertices of $T$, respectively.
	Each big vertex represents an induced $C_4$ of $\mathcal C$ while each small vertex represents a vertex of $G$ participating in at least two induced $C_4$.
	Each small vertex is adjacent to the big vertices which represent a $C_4$ this vertex participates in.
	%this was a different paragraph originally
	We claim that with this definition $T$ is indeed a tree (actually a forest). Assume now for a contradiction that $T$ contains a cycle and let $C$ be one of the shortest such cycles in $T$. 
	
	First, consider the case that $C$ has length $l\geq 6$.
	Since $T$ is bipartite, due to its construction, $l$ is even and the vertices of $C=\{s_1,b_1,s_2,b_2,\ldots,s_{l/2},b_{l/2}\}$ alternate between big and small.
	We denote by $P_{b_i}^{s_w,s_z}$ a shortest path containing edges from the $C_4$ represented by the big vertex $b_i$ with endpoints the vertices represented by $s_w$ and $s_z$.
	%this was a different paragraph originally.
	Due to $C$, the cycle $P_{b_1}^{s_1,s_{l/2}}\ldots P_{b_l/2}^{s_{l/2-1},s_{l/2}}$ exists in $G$ as a subgraph. Note that since we have assumed that $C$ is a minimal length cycle of $T$ it is also chordless.
	Hence, in order for a hole not to be an induced subgraph of $G$ at least one chord must exist connecting two vertices corresponding to two small ones of $T$.
	This however would create a double house as a subgraph with the two $C_4$ forming it being the two that correspond to big vertices of $T$. By  Lemma~\ref{lemma:doublehouse} this means that one of the  $C_4$ is not induced, a contradiction to the construction of $T$.
	Notice also that in the case where $l=6$ we  directly find a double house and reach a contradiction using the same arguments.
	
	Moreover the assumption that $l=4$, leads us to a contradiction to the fact that two $C_4$ have at most one vertex in common.
	Hence, $T$ is a forest and the vertices mentioned in line 6 are the ones belonging only to a cycle represented by one leaf belonging in $B$.
\end{proof}

\begin{lemma}\label{lemma:C4-tree}
	Given an \ourclass graph $G$, lines 1-10 of Algorithm~\ref{alg:: ourclassGL} transforms $G$ into a weighted chordal graph $G''$, with the same connectivity as $G$ and such that each vertex from $G$ is involved in at most one edge contraction to create $G''$.
\end{lemma}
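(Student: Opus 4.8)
# Proof Proposal for Lemma~\ref{lemma:C4-tree}

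The plan is to verify, one at a time, the three claims bundled into the statement: (i) $G''$ is chordal, (ii) $G''$ has the same connectivity as $G$, and (iii) no vertex of $G$ participates in more than one contracted edge. I would handle these roughly in reverse order of difficulty, dealing with the bookkeeping items first and leaving the chordality argument for last.

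\textbf{Bounded contraction per vertex.} First I would argue (iii). The while-loop of Algorithm~\ref{alg:: ourclassGL} picks, at each step, a triple $v_1,v_2,v_3$ lying on a cycle $C\in\C$ that shares no vertex with any other cycle still in $\C$ (such a triple exists by Lemma~\ref{lemma:existence of three vertices}), adds one edge incident to some non-terminal $v$ among them to $E'$, and then \emph{deletes} $v_1,v_2,v_3$ from $G'$ and $C$ from $\C$. Since the new edge is chosen inside $G'[\{v_1,v_2,v_3\}]$, both its endpoints are among $v_1,v_2,v_3$, and all three are removed from $G'$ immediately afterwards; hence no vertex of $G'$ (equivalently, no vertex of $G$) can be an endpoint of two edges in $E'$. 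Therefore each vertex of $G$ is involved in at most one contraction when building $G''$. This also shows the contractions are of \emph{independent} edges, so they can be performed simultaneously and the weight assigned to a contracted vertex is exactly $2$ while all other vertices keep weight $1$; in particular $\wmax\le 2$.

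\textbf{Preservation of connectivity.} For (ii), I would first note that the algorithm, in line~1, adds edges only between pairs of non-adjacent terminals that lie on a common induced $C_4$; adding edges cannot decrease connectivity, and by Lemma~\ref{lemma:add-chord-to-cycle} the result is still \ourclass (each such edge is a chord of an induced $C_4$, so the class is preserved). Then each edge of $E'$ is, by construction, an edge of some cycle in $\C$, i.e.\ an edge of an induced $C_4$ of the current graph. The subtlety is that contractions happen iteratively, so I must check that each edge of $E'$ is still an edge of an induced $C_4$ at the moment it is contracted — but since the edges of $E'$ have pairwise disjoint endpoint sets and each was certified as lying on an induced $C_4$ of $G$ (after the line-1 additions), and since the class \ourclass is closed under contraction of a $C_4$-edge (Lemma~\ref{lemma:closed-under-contraction}), every intermediate graph is \ourclass and each not-yet-contracted edge of $E'$ remains an edge of an induced $C_4$. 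Applying Lemma~\ref{lemma:contraction-connectivity} once per edge of $E'$ then gives that $k$-connectivity is preserved throughout, so $G''$ has the same connectivity as $G$. (For $k\le 4$ one can invoke the known constructions or simply the Gy\H{o}ri–Lov\'asz theorem directly; the interesting regime $k\ge5$ is exactly where Lemma~\ref{lemma:3-vertex-separator} feeds into Lemma~\ref{lemma:contraction-connectivity}.)

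\textbf{Chordality of $G''$ — the main obstacle.} This is the heart of the lemma. The claim is that once the while-loop terminates, $G''$ has no induced $C_4$ at all, and hence (being \ourclass, which forbids holes) no induced cycle of length $\ge4$, i.e.\ it is chordal. I would prove this by showing that every induced $C_4$ of $G$ is destroyed. Consider the bipartite forest $T$ from the proof of Lemma~\ref{lemma:existence of three vertices}, with big vertices for induced $C_4$'s and small vertices for vertices of $G$ lying on $\ge2$ of them. The while-loop processes a cycle $C$ exactly when its big vertex is a leaf of the current forest (its three ``free'' vertices $v_1,v_2,v_3$ are precisely those not shared with any other remaining cycle), contracts one edge inside $C$, and removes $C$. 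Contracting an edge $uv$ of an induced $C_4$ $C=\{u,v,x,y\}$ with $ux,vy\in E$, $uy,vx\notin E$: the merged vertex $w$ is then adjacent to both $x$ and $y$, and since $G$ is \ourclass Lemma~\ref{lemma:observation} forces $w$ to be \emph{universal} to any induced $C_4$ on which it remains — but $x,y$ together with $w$ no longer form an induced $C_4$ (now a triangle or $K_4$), so $C$ is gone. The key point, and where the \ourclass hypothesis is essential, is that contracting this one edge does not spoil the leaf structure for the \emph{other} cycles: the only vertex of $C$ that $v_1,v_2,v_3$ touch is internal to $C$ alone, so removing $C$ from $T$ just deletes a leaf, leaving a smaller forest, and the invariant ``every remaining big vertex with all-free triple is a leaf'' is maintained. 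By induction on $|\C|$, when the loop ends $\C=\varnothing$ and $G''$ contains no induced $C_4$. Since $G''$ is \ourclass (hence hole-free), it has no induced cycle of length $\ge4$, so it is chordal.

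I expect the genuinely delicate step to be confirming that the iterative removal of leaf-cycles is \emph{complete} — that every induced $C_4$ of the original graph corresponds to a big vertex that eventually becomes a leaf and gets processed — and that no \emph{new} induced $C_4$ is created by a contraction in a way that would re-enlarge $T$. The first is handled by the forest structure (every finite forest can be stripped leaf-by-leaf), and the second follows from Lemma~\ref{lemma:closed-under-contraction} together with the observation that any new $C_4$ through $w$ would, by Lemma~\ref{lemma:observation}, have $w$ universal to it, contradicting the definition of a new induced cycle; so no new induced $C_4$ survives, and the process indeed terminates with a chordal graph.
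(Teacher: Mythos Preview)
Your proof is correct and follows essentially the same approach as the paper---invoking Lemma~\ref{lemma:add-chord-to-cycle} for line~1, the forest structure of Lemma~\ref{lemma:existence of three vertices} for the while-loop, and Lemmas~\ref{lemma:closed-under-contraction} and~\ref{lemma:contraction-connectivity} for closure and connectivity---and is in fact more explicit than the paper's own terse argument, which leaves the chordality and connectivity conclusions largely implicit. One small remark: to make the claim ``each not-yet-contracted edge of $E'$ remains an edge of an induced $C_4$'' rigorous you should fix the contraction order to match the while-loop's processing order (the fourth vertex $v_4^{(i)}$ of an earlier-processed $C_i$ can later appear as an endpoint of some $e_j$, and contracting $e_j$ first could put a chord across $C_i$); since the edges of $E'$ are vertex-disjoint this does not affect $G''$ itself, only the intermediate justification.
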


At last, notice that we can easily alter Algorithm~\ref{alg:: ourclassGL} to also work for weighted graphs, with the simple change of setting the weights of a vertex in $G''$ in line 10 to the sum of the weights of the original vertices it was contracted from.
With this alteration, we can conclude now the proof of \Cref{thm:ourclass-GL-partition} with the following.

\begin{lemma}
	\label{lemma::running-time}
 Algorithm~\ref{alg:: ourclassGL} works correctly and runs in time $\bigO(|V|^4)$.	
\end{lemma}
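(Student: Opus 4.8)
The plan is to establish correctness and then the running time separately, using the structural lemmata already proven. For correctness, I would first argue that lines 1--10 are justified: by \Cref{lemma:add-chord-to-cycle} adding an edge between two non-adjacent terminals lying on a common induced $C_4$ keeps the graph \ourclass and removes that $C_4$, so line 1 is well-defined; by \Cref{lemma:existence of three vertices} the selection in line 6 always succeeds while $\C\neq\varnothing$; and by \Cref{lemma:C4-tree} the whole preprocessing yields a weighted chordal graph $G''$ with the same connectivity as $G$, in which each original vertex participates in at most one contraction, so every vertex of $G''$ has weight at most $2$ (hence $\wmax\le 2$, giving the additive error of $1$ in the unweighted case and $2\wmax$ bound in the weighted case). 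I would also note that no terminal is ever contracted: in line 7 we always pick a non-terminal endpoint, and this is possible because among the three chosen vertices $v_1,v_2,v_3$ on a cycle at most\ldots here I would need to be slightly careful --- the paper must ensure that of $v_1,v_2,v_3$ at least one is a non-terminal; this is guaranteed because after line 1 every induced $C_4$ has at most\ldots actually the cleanest justification is that $v_1,v_2,v_3$ induce a path (three consecutive vertices of $C$) and if all three were terminals then the two non-adjacent ones among them were made adjacent in line 1, contradicting that $C$ is still an induced $C_4$ in $\C$. So at least one of $v_1,v_2,v_3$ is a non-terminal and line 7 is well-defined.

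Next, since $G''$ is $k$-connected chordal and the terminals $t_1,\dots,t_k$ survive as distinct vertices with $w_i\ge w(t_i)$ (the latter holds because $w_i\ge n_i\ge 2\ge \wmax\ge w(t_i)$ in the unweighted reduction, and is assumed in the weighted statement --- modulo checking that contraction does not merge a terminal into a heavier blob, which again follows from terminals never being contracted), \Cref{thm::GL-theorem-chordal-weighted} applied via Algorithm~\ref{alg::chordal-weighted} returns sets $S_1'',\dots,S_k''$ that partition $V(G'')$ with $G''[S_i'']$ connected, $t_i\in S_i''$, and $w_i-\wmax < w(S_i'') < w_i+\wmax$. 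The final step reverses the contractions: replacing each contracted vertex in $S_i''$ by the (at most two) original vertices gives sets $S_1,\dots,S_k$ that partition $V(G)$. Connectivity of $G[S_i]$ is preserved because contracting an edge and then un-contracting it within a connected set keeps it connected (the un-contracted pair is an edge of $G$). For the sizes, since each contracted vertex of $G''$ carries weight at most $2=\wmax$, we get $|S_i| = w(S_i'') $ and hence $w_i - 2 < |S_i| < w_i + 2$, i.e.\ $n_i-1\le |S_i|\le n_i+1$; in the weighted extension the same argument with general weights and $\wmax$ of $G$ gives $w_i-2\wmax < w(S_i) < w_i+2\wmax$, because a blob in $G''$ has weight at most $2\wmax$.

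For the running time: computing $\C$, the set of all induced $C_4$'s, takes $\bigO(|V|^4)$ by brute force over $4$-tuples (or faster). The while-loop of lines 5--9 runs at most $|\C| = \bigO(|V|^4)$ times, and each iteration --- finding a cycle $C$ none of whose three chosen vertices lies on another cycle of $\C$ --- can be supported by the tree/forest structure $T$ from \Cref{lemma:existence of three vertices}: picking a leaf of $T$ in $B$ and reading off the vertices of its $C_4$ not shared with its unique neighbor costs $\bigO(|V|)$ after an initial $\bigO(|V|^4)$ construction of $T$, so the loop is $\bigO(|V|^4)$ overall (or better). Forming $G''$ by contracting the $\bigO(|V|)$ edges of $E'$ is $\bigO(|V|^2)$, running Algorithm~\ref{alg::chordal-weighted} is $\bigO(|V|^2)$ by \Cref{thm::GL-theorem-chordal-weighted}, and reversing the contractions in the output sets is $\bigO(|V|)$. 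The dominating term is the handling of $\C$, giving total time $\bigO(|V|^4)$.

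I expect the main obstacle to be the careful bookkeeping of the un-contraction step and the terminal-safety argument: one has to verify that no terminal is ever an endpoint chosen for contraction in line 7, that after un-contraction the resulting $S_i$ still contains $t_i$ and stays connected, and --- most delicately --- that the reduction to Algorithm~\ref{alg::chordal-weighted} is valid, i.e.\ that $G''$ really is chordal (this is exactly \Cref{lemma:C4-tree}, which in turn leans on \Cref{lemma:closed-under-contraction} and \Cref{lemma:contraction-connectivity}), that it is still $k$-connected, and that the weight demands $w_i$ still satisfy $w_i\ge w(t_i)$ in $G''$. The deviation bound of $2\wmax$ (rather than $\wmax$) is precisely the price paid for vertices of $G''$ having weight up to $2\wmax$, so the ``at most one contraction per vertex'' clause of \Cref{lemma:C4-tree} is the crux that must be invoked explicitly.
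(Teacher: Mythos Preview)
Your proposal is correct and follows essentially the same route as the paper: reduce to \Cref{thm::GL-theorem-chordal-weighted} via \Cref{lemma:C4-tree}, invoke Algorithm~\ref{alg::chordal-weighted}, then un-contract, with the $\bigO(|V|^4)$ bottleneck being the enumeration of induced $C_4$'s. You are in fact more careful than the paper on several points (terminal safety in line~7, the condition $w_i\ge w(t_i)$ in $G''$). The one observation you omit and the paper states explicitly is that the edges inserted in line~1 are between terminal vertices, which end up in distinct sets $S_i$, so these artificial edges are never needed for the connectivity of any $G[S_i]$ after un-contraction; without this remark your connectivity argument only shows that $S_i$ is connected in the graph \emph{after} line~1, not in $G$ itself.
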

\begin{proof}
	By \Cref{lemma:C4-tree}, $G''$ is a chordal graph with maximum vertex weight $2\wmax$. Further, observe that we did not merge terminal vertices with each other, thus we can properly run Algorithm~\ref{alg::chordal-weighted} on it. By the correctness of this algorithm (\Cref{thm::GL-theorem-chordal-weighted}), we know that $S_1,\dots,S_k$  in line 11 is a GL-partition for $G''$ with deviation $2\wmax$. Since reversing edge-contraction does not disconnect these sets, the unfolded sets $S_1,\dots,S_k$ are thus also a GL-partition for $G$ with deviation $2\wmax$; note here that the only edges we added to create $G''$ are between terminal vertices, which are in separate sets $S_i$ by definition.
	
	The most time consuming part of  Algorithm~\ref{alg:: ourclassGL} is the preprocessing to transform the input graph into a weighted chordal graph which requires $\mathcal \bigO({|V|}^4)$ time in order to find all the  induced $C_4$ (note that the induced $C_4$ are at most $(n-4)/3$ since they induce a tree).
\end{proof}

Moreover, as is the case for chordal graphs, we can sacrifice terminals to enlarge the considered graph class.
%removed this "as is the case for chordal graphs," after the moreover for saving space
\begin{corollary}
	Let $G$ be a $k$-connected graph which becomes \ourclass after adding $k/2$ edges. Then, given those edges, a \glp of $G$ with deviation $1$ (also its weighted version with deviation $2w_{max}-1$) can be computed in polynomial time but without fixed terminals.
\end{corollary}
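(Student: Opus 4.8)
The plan is to reduce to \Cref{thm:ourclass-GL-partition} exactly as the corollary at the end of Section~3 reduces the chordal case to \Cref{thm::GL-theorem-chordal-weighted}, exploiting two facts about Algorithm~\ref{alg:: ourclassGL}: distinct terminals always end up in distinct parts, and an edge whose two endpoints are both terminals is therefore never contained in any output set. Let $F$ be the given set of at most $k/2$ edges with $G^+ := (V, E \cup F)$ being \ourclass (in the weighted statement, $G^+$ carries the original vertex weights). Adding edges cannot decrease connectivity, so $G^+$ is still $k$-connected. The edges of $F$ have at most $2 \cdot (k/2) = k$ endpoints; I would take these endpoints as terminals and, if fewer than $k$ distinct ones occur, pad the list with arbitrary further vertices to obtain $k$ distinct terminals $t_1,\dots,t_k$ that cover every endpoint of every edge of $F$.

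Next I would run Algorithm~\ref{alg:: ourclassGL} --- its weighted variant for the weighted statement --- on $G^+$ with the terminals $t_1,\dots,t_k$ and the demands $n_1,\dots,n_k$ (resp.\ weights $w_1,\dots,w_k$). By \Cref{lemma::running-time} together with \Cref{thm:ourclass-GL-partition}, this runs in $\bigO(|V|^4)$ time and returns a partition $S_1,\dots,S_k$ of $V$ with $t_i \in S_i$, each $G^+[S_i]$ connected, and $n_i - 1 \le |S_i| \le n_i + 1$ (resp.\ $w_i - 2\wmax < w(S_i) < w_i + 2\wmax$, i.e.\ deviation at most $2\wmax - 1$ for integer weights).

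It remains to transfer this to $G$. Every $e \in F$ has both endpoints among $t_1,\dots,t_k$, and distinct terminals lie in distinct parts, so the two endpoints of $e$ lie in two different sets; hence no edge of $F$ is contained in any $S_i$, which means $G^+[S_i] = G[S_i]$ for all $i$. The extra edges inserted by Algorithm~\ref{alg:: ourclassGL} in its first line are likewise only between terminals, so the same reasoning applies to them. Consequently each $G[S_i]$ is connected, and $S_1,\dots,S_k$ is a \glp of $G$ with deviation $1$ (resp.\ $2\wmax - 1$), computed in polynomial time; the terminals were chosen by us rather than prescribed, which is exactly what ``without fixed terminals'' means here.

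I do not expect a genuine obstacle. The construction is well-defined because at most $k$ endpoints fit into the $k$ terminal slots; the hypotheses hand us $k$-connectivity of $G^+$ and its membership in \ourclass for free; and the only substantive point --- identical to the chordal corollary --- is the observation that the freedom to choose the terminals lets us make all of $F$ ``invisible'' to the output partition.
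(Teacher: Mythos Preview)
Your proposal is correct and is exactly the argument the paper has in mind: the paper gives no proof for this corollary, but the preceding chordal corollary is justified by the one-line observation that an added edge with both endpoints chosen as terminals never lies inside a single output part, and you have spelled out precisely that reduction for the \ourclass setting (including the point about the extra terminal--terminal edges inserted in line~1 of Algorithm~\ref{alg:: ourclassGL}).
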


%\subsubsection{Acknowledgements} Please place your acknowledgments at
%the end of the paper, preceded by an unnumbered run-in heading (i.e.
%3rd-level heading).

%
% ---- Bibliography ----
%
% BibTeX users should specify bibliography style 'splncs04'.
% References will then be sorted and formatted in the correct style.
%
 \bibliographystyle{splncs04}
 %\bibliography{mybibliography}
%
\bibliography{chordalGL} 
\newpage
\appendix

\section{Correctness of Algorithm~\ref{alg::chordal-weighted} (\algChordalGLweighted)}\label{appendix1}

For the correctness Algorithm~\ref{alg::chordal-weighted}, we need to prove that all vertices are assigned, i.e.~the algorithm terminates, and if this is the case, then $S_1, \dots, S_k$ corresponds to a connected vertex partition satisfying the required weight conditions for each $S_i$.
We start by proving that the Algorithm~\ref{alg::chordal-weighted} eventually ends up assigning all vertices into connected vertex sets.

\begin{lemma}
	\label{lemma::weighted-terminates-conncected-partition}
	Algorithm~\ref{alg::chordal-weighted} terminates, such that all vertices are assigned, where $G[S_i]$ is connected for all $i \in [k]$.
	Further, the while-loop iterates at most $|V|$ times.
\end{lemma}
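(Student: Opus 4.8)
The plan is to track two quantities across the while-loop: the number of assigned vertices $|\bigcup_{i\in[k]} S_i|$ and the size of the active index set $I$. First I would observe that every iteration of the while-loop either adds a fresh vertex $v'$ to some $S_{j'}$ or shrinks $I$ by removing $j'$; in the branch where $I$ shrinks we may also add $v'$. Since $|I|$ only ever decreases and each iteration that does not increase the assigned-vertex count must strictly decrease $|I|$, and since the loop guard forces $|I|\geq 2$, the loop can run at most $|V|$ times: there can be at most $|V|$ vertex-adding iterations and at most $k-1\leq |V|$ index-removing iterations, and in fact every iteration adds exactly one vertex except possibly those where the ``\textbf{If}'' on line~\ref{alg::weighted-if2} fails, of which there are fewer than $k$. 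This gives the ``at most $|V|$ iterations'' claim after a small bookkeeping argument (I would phrase it as: the total number of iterations is bounded by the number of vertices ever assigned, which is $\leq |V|$, plus noting the degenerate iterations are absorbed).

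Next I would argue that when the loop exits, all vertices are assigned. The loop exits either because $\bigcup_{i\in[k]} S_i = V(G)$ — in which case we are done immediately — or because $|I|=1$. In the latter case, the final line of the algorithm explicitly assigns every remaining unassigned vertex to the unique $S_j$ with $j\in I$, so after line~\ref{algWeightedChordal::last-step} all vertices are assigned. The one thing that needs care here is that this is only well-defined if $I$ never becomes empty before the loop exits; I would show that $I$ cannot drop to $0$ inside the loop, because an index $j'$ is removed from $I$ only when $w(S_{j'})+w(v')\geq w_{j'}$, and since $\sum_i w_i = w(V)$ and each non-removed set still has weight strictly below its demand (as it is still in $I$), at least one index must remain — formally, if $I$ were emptied then $\sum_{i\in[k]} w(S_i) \geq \sum_{i\in[k]} w_i = w(V)$ would force every vertex to already be assigned and every inequality tight, contradicting that we are mid-loop with the guard $|I|\neq 1$ having been satisfied. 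So the exit always happens with $|I|\geq 1$, hence $|I|=1$ exactly when the first disjunct of the guard fails.

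For connectivity, I would use \Cref{corollary::never_iso}: as long as $I$ contains a non-full set and there is an unassigned vertex, that set has an unassigned neighbor, so $V'\neq\varnothing$ and $v'$ is well-defined; moreover $v'\in N(S_{j'})$ by construction of $J$ and $j'$, so adding $v'$ keeps $G[S_{j'}]$ connected, and each $S_i$ is only ever grown by vertices adjacent to it, starting from the connected singleton $\{t_i\}$. The final bulk-assignment step on line~\ref{algWeightedChordal::last-step} also preserves connectivity of the surviving $S_j$: since $I=\{j\}$ means all other sets are ``full'' and closed, the unassigned vertices together with $S_j$ form (by $k$-connectivity and the fact that removing the closed full sets cannot disconnect more than the remaining pieces — here I would invoke that $S_j \cup (V\setminus\bigcup_i S_i)$ induces a connected subgraph because any unassigned vertex lies on a path to $t_j$ avoiding the other terminals, using $k$-connectivity with $k\geq 2$) a connected induced subgraph.

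The main obstacle I anticipate is the last point — arguing that the leftover vertices assigned in the final step actually form a connected set together with $S_j$. The earlier corollary only guarantees a non-empty unassigned neighborhood for a \emph{non-full} set during the loop; once we are at $|I|=1$ we have stopped maintaining that invariant vertex-by-vertex, so I will need a separate connectivity argument for the dump step, most likely: the union of all the ``full'' sets $\bigcup_{i\neq j} S_i$ has size strictly less than $|V|$ in a useful sense, and since $G$ is $k$-connected with $k\geq 2$, and these sets are pairwise built around distinct terminals, every unassigned vertex is connected to $S_j$ through unassigned vertices — this may require a short induction on the order in which vertices were (not) processed, or a direct appeal to the structure of $V'$ having been nonempty up to the point $I$ became a singleton.
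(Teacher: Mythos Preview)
Your overall structure---tracking assigned vertices and $|I|$, noting each iteration either assigns a fresh vertex or removes an index from $I$, and handling the two exit cases of the loop separately---matches the paper's proof. Two remarks:

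First, your digression on why $I$ cannot become empty is unnecessary. Inside the loop, $I$ shrinks by at most one per iteration (only the single index $j'$ is removed), so $|I|$ cannot jump from $2$ to $0$; the guard $|I|\neq 1$ guarantees the loop exits precisely when $|I|=1$ (or when everything is assigned). Your weight-counting argument is correct but superfluous.

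Second, and more importantly, your handling of connectivity after the dump step on line~\ref{algWeightedChordal::last-step} is where you drift from the clean argument. You try to argue directly that $S_j\cup\bigl(V\setminus\bigcup_i S_i\bigr)$ is connected via ad hoc $k$-connectivity reasoning (paths to $t_j$ avoiding other terminals, etc.), and you correctly flag this as the main obstacle. The paper avoids this entirely: \Cref{corollary::never_iso} is stronger than ``$V'\neq\varnothing$''---it says that \emph{each individual} non-full set has an unassigned neighbor whenever unassigned vertices remain. Since Algorithm~\ref{alg::chordal-weighted} respects the same p.e.o.\ priorities as Algorithm~\ref{alg::chordal}, this corollary continues to apply when $|I|=1$: as long as there is an unassigned vertex, the sole non-full set $S_j$ has one in its neighborhood, so you can absorb the remaining vertices into $S_j$ one at a time while preserving connectivity. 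That is the entire argument; no separate path-based reasoning is needed. Your proposed direct approaches would require substantially more work (and the ``path avoiding other terminals'' idea does not obviously go through, since the full sets $S_i$ for $i\neq j$ may contain many non-terminal vertices).
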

\begin{proof}
	Similarly to Algorithm~\ref{alg::chordal}, because we add only vertices to non-full weighted sets from its unassigned neighborhood, the $S_i$s correspond always to connected vertex sets.
	Note that by \Cref{corollary::never_iso} as long as we have non-full weighted sets and unassigned vertices, the while-loop makes progress in the sense that either an unassigned vertex becomes assigned or a set is declared to be a full weighted set.
	Thus, if there are unassigned vertices after the while-loop, and we have $|I| = 1$ and we assign the remaining ones to the last non-full weighted set $S_j$ with $j \in I$ (cf.~line~\ref{algWeightedChordal::last-step}).
	Observe that \Cref{corollary::never_iso} implies that $G[S_j]$ is still connected after adding the remaining vertices.
	
	For the second part of the lemma, when we reach the while loop, there are exactly $|V| - k$ unassigned vertices.
	Except of at most $k$ times an unassigned vertex becomes assigned in an iteration of the while-loop.
	This in turn implies that we have not more than $|V|$ iterations.
\end{proof}

The running time in \Cref{thm::GL-theorem-chordal-weighted} is $\bigO(|V|^2)$ since the while-loop iterates at most $|V|$ times and each operation in this loop runs in $\bigO(|V|)$ time.
%The proof of the stated running time in \Cref{thm::GL-theorem-chordal-weighted} is similar to the proof in \Cref{lemma::running-time1} with the only difference that in at most $k$ iterations it can happen that no unassigned vertex becomes assigned (cf.~\Cref{lemma::weighted-terminates-conncected-partition}).
Hence, to prove \Cref{thm::GL-theorem-chordal-weighted} it remains to show that the required weight condition for each part of the connected vertex partition $S_1, \dots, S_k$ is satisfied.

The indices in $I$ denote the non-full weighted sets and therefore, $\compi := [k] \setminus I$ the indices of the full weighted sets.
Declaring a set $S_{j'}$ as complete weighted set, i.e.~we remove $j'$ from $I$, implies that $w(S_{j'}) + w(v') \geq w_{j'}$.
If $w(S_{j'}) + w(v') \neq w_{j'}$, whether we add $v'$ to $S_{j'}$ depends on whether the value of $\sum_{i \in \compi} (w_i - w(S_i))$ is less than zero or not.
Basically, this sum serves to balance the variations in the required weights of the sets $S_1, \dots, S_k$, and determines the moment we declare $S_{j'}$ to be a full weighted set whether we want $w(S_{j'}) < w_{j'}$ or $w(S_{j'}) > w_{j'}$.
During the algorithm the sum $\sum_{i \in \compi} (w_i - w(S_i))$ satisfies the following invariant. 

\begin{lemma}
	\label{lemma::invariant-weights}
	In the Algorithm~\ref{alg::chordal-weighted}, before reaching line~\ref{algWeightedChordal::last-step} we have each time $|\sum_{i \in \compi} (w_i - w(S_i))| < \wmax$.
\end{lemma}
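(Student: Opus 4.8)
The plan is to prove the invariant $|\sum_{i \in \compi}(w_i - w(S_i))| < \wmax$ by induction on the iterations of the while-loop. The base case is clear: before the first iteration $\compi = \varnothing$, so the sum is $0 < \wmax$. For the inductive step I would observe that the quantity $\sum_{i \in \compi}(w_i - w(S_i))$ only changes in an iteration where a new index $j'$ is moved from $I$ to $\compi$ (when we merely add $v'$ to a still-non-full set $S_{j'}$, neither $\compi$ nor any $w(S_i)$ with $i \in \compi$ changes). So it suffices to analyze a single iteration in which $j'$ is declared full. Let $D := \sum_{i \in \compi}(w_i - w(S_i))$ denote the value of the sum \emph{before} this iteration, so by the induction hypothesis $|D| < \wmax$, and let $D'$ be its value \emph{after}; then $D' = D + (w_{j'} - w(S_{j'}^{\text{new}}))$ where $S_{j'}^{\text{new}}$ is the set $S_{j'}$ after the (possible) addition of $v'$.

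Next I would split into the two branches of the inner \texttt{if} at line~\ref{alg::weighted-if2}. In the branch where we \emph{do} add $v'$ to $S_{j'}$ (i.e.\ $D \geq 0$ or $w(S_{j'}) + w(v') = w_{j'}$): since $j'$ was declared full we have $w(S_{j'}) + w(v') \geq w_{j'}$, hence $w_{j'} - w(S_{j'}^{\text{new}}) = w_{j'} - w(S_{j'}) - w(v') \in (-\wmax, 0]$ (it is $\le 0$, and $> -\wmax$ because $w(S_{j'}) < w_{j'}$ held just before — $j'$ was in $I$ — so $w_{j'} - w(S_{j'}) - w(v') > -w(v') \ge -\wmax$). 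Thus $D' = D + \delta$ with $\delta \in (-\wmax, 0]$. We need $|D'| < \wmax$: the upper side is fine since $D' \le D < \wmax$; for the lower side, note this branch is entered precisely when $D \ge 0$ \emph{or} $\delta = 0$, and in either case $D' = D + \delta > 0 + (-\wmax) = -\wmax$ when $D \ge 0$, while if $\delta = 0$ then $D' = D > -\wmax$. So $|D'| < \wmax$. In the other branch (we do \emph{not} add $v'$, so $D < 0$ and $w(S_{j'}) + w(v') > w_{j'}$ strictly): here $S_{j'}^{\text{new}} = S_{j'}$, so $\delta = w_{j'} - w(S_{j'}) \in (0, \wmax)$ — it is positive since $j' \in I$ meant $w(S_{j'}) < w_{j'}$, and it is $< \wmax$ since $w(S_{j'}) + w(v') > w_{j'}$ gives $w_{j'} - w(S_{j'}) < w(v') \le \wmax$. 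Then $D' = D + \delta$ with $D < 0$ and $\delta \in (0,\wmax)$, so $D' < \wmax$ on the upper side and $D' > D + 0 \ge$ ... here I must be careful: I only know $D > -\wmax$, not $D \ge 0$, so $D' > -\wmax + 0 = -\wmax$, giving the lower bound, and $D' = D + \delta < 0 + \wmax = \wmax$ the upper. Hence $|D'| < \wmax$ in both branches, completing the induction.

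The main obstacle I anticipate is the bookkeeping in the second branch: one must confirm that the guard condition ``$\sum_{i \in \compi}(w_i - w(S_i)) \ge 0$ \textbf{or} $w(S_{j'}) + w(v') = w_{j'}$'' in line~\ref{alg::weighted-if2} is evaluated \emph{before} $j'$ is removed from $I$ or \emph{after} — reading the pseudocode, $I$ is updated to $I \setminus \{j'\}$ first and then the sum over $\compi = [k]\setminus I$ is taken, so $j'$ is already included in $\compi$ at the moment of the test; one must then recheck that with this reading the sign argument still goes through (it does, since at that instant $w(S_{j'})$ has not yet been increased, so the term $j'$ contributes $w_{j'} - w(S_{j'}) > 0$, which only helps the ``$\ge 0$'' side). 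A secondary subtlety is making sure the strict inequalities are never violated at the boundary — in particular that $w(S_{j'}) < w_{j'}$ genuinely holds whenever $j' \in I$, which is exactly how $I$ is maintained, and that $w(v') \le \wmax$ by definition of $\wmax$; these are the facts that convert the weak inequalities coming from ``$j'$ declared full'' into the strict bound $|D'| < \wmax$.
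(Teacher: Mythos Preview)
Your argument is essentially the paper's own proof: both proceed by induction on the growth of $\compi$ (you phrase it as induction on iterations, but since the sum changes only when some $j'$ is declared full these are the same), then case-split on whether $v'$ is added, bounding the increment $\delta$ in $(-\wmax,0]$ respectively $(0,\wmax)$ and using the sign condition in the guard to close each case. Two small remarks: your base case is slightly off---$\compi$ need not be empty after initialization, since any $i$ with $w(t_i)=w_i$ is already excluded from $I$---but each such term contributes $0$, so the sum is still $0$ as you need; and regarding your obstacle, the paper's proof in fact reads the test at line~\ref{alg::weighted-if2} as a sum over $\compi\setminus\{j'\}$ (your $D$), so your main case analysis with the condition ``$D\ge 0$'' matches theirs exactly, and your worry about the $j'$-included reading is a genuine subtlety of the pseudocode that the paper does not address either.
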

\begin{proof}
	We prove this lemma by induction on $|\compi|$.
	After assigning each terminal to a corresponding vertex set, we initialize $I$ by $I = \{i \in [k] \mid w(S_i) < w_i\}$.
	That is, $\sum_{i \in \compi} (w_i - w(S_i)) = 0$, since either $\compi = \varnothing$ or each $i \in \compi$ satisfy $w(S_i) = w_i$ by $w(t_i) \leq w_i$.
	
	Assume $|\sum_{i \in \compi} (w_i - w(S_i))| < \wmax$ for $|\compi| \leq \ell < k$ and we now add $j'$ to $\compi$ according to the algorithm, i.e.~we remove $j'$ from $I$.
	First, we show that $|w_j - w(S_{j'})| < \wmax$ in both possible future cases $v' \in S_{j'}$ or $v' \notin S_{j'}$.
	By $w(S_j \cup \{v'\}) = w_j$ we have added $v'$ to $S_{j'}$ and $|w_j - w(S_{j'})| = 0 < \wmax$ holds.
	Thus, we can assume that $w(S_{j'} \setminus \{v'\}) < w_j$ and $w(S_{j'} \cup \{v'\}) > w_{j'}$ which in turn results to $|w_{j'} - w(S_{j'} \cup \{v'\})| < \wmax$ and $|w_{j'} - w(S_{j'} \setminus \{v'\})| < \wmax$ by $w(v') \leq \wmax$.
	
	If $w(S_{j'} \cup \{v'\}) = 0$, we have $\sum_{i \in \compi \setminus \{j'\}} (w_i - w(S_i)) = \sum_{i \in \compi} (w_i - w(S_i))$
	and we are done by the induction hypotheses
	
	We can assume that $w(S_{j'} \cup \{v'\}) > w_{j'}$.
	If $0 \leq \sum_{i \in \compi \setminus \{j'\}} (w_i - w(S_i)) < \wmax$ we add $v'$ to $S_{j'}$.
	It follows that $\sum_{i \in \compi} (w_i - w(S_i)) < \sum_{i \in \compi \setminus \{j'\}} (w_i - w(S_i))$ by $w_{j'} - w(S_{j'}) < 0$.
	By $-\wmax < w_j - w(S_{j'}) < 0$ the sums might deviate by at most $\wmax-1$ from each other.
	Thus, by $\sum_{i \in \compi \setminus \{j'\}} (w_i - w(S_i)) \geq 0$ we obtain $|\sum_{i \in \compi} (w_i - w(S_i))| < \wmax$.
	
	%, since $w(S_{j'}) = w(S_{j'} \setminus \{v'\}) + w(v') < w_{j'} + \wmax$  leads to $-\wmax < w_{j'} - w(S_{j'})$. %by $w(S_{j'}) < w_{j'}$
	%Because the sum $\sum_{i \in \compi} (w_i - w(S_i))$ cannot increase in that case, we obtain $-\wmax < \sum_{i \in \compi} (w_i - w(S_i)) < \wmax$.
	
	In case $-\wmax < \sum_{i \in \compi \setminus \{j'\}} (w_i - w(S_i)) < 0$ we do not add $v'$ to $S_{j'}$ and obtain $\sum_{i \in \compi} (w_i - w(S_i)) > \sum_{i \in \compi \setminus \{j'\}} (w_i - w(S_i))$ by $w_{j'} - w(S_{j'}) > 0$.
	Furthermore, by $0 < w_j - w(S_{j'}) < \wmax$ the sums deviate by at most $\wmax-1$ from each other and finally, by $\sum_{i \in \compi \setminus \{j'\}} (w_i - w(S_i)) < 0$ we obtain $|\sum_{i \in \compi} (w_i - w(S_i))| < \wmax$.
	%
	
	%	after adding $j'$ to $\compi$ the sum $\sum_{i \in \compi \setminus \{j'\}} (w_i - w(S_i))$ only can increase and this by at most $\wmax$ implying that $-\wmax < \sum_{i \in \compi} (w_i - w(S_i)) < \wmax$.
\end{proof}

With \Cref{lemma::invariant-weights} we prove now the last part of the proof \Cref{thm::GL-theorem-chordal-weighted}.

\begin{lemma}
	If Algorithm~\ref{alg::chordal-weighted} terminates, then we have $w_i - \wmax < w(S_i) < w_i + \wmax$ for each $i \in [k]$. 
\end{lemma}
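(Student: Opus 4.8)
The plan is to combine two ingredients: (i) every set $S_i$ that gets declared a full weighted set during the while-loop already satisfies $|w_i-w(S_i)|<\wmax$ at the moment its index leaves $I$, and is never enlarged afterwards; and (ii) \Cref{lemma::invariant-weights}, which bounds the \emph{total} signed deviation $\sum_{i\in\compi}(w_i-w(S_i))$ of all full weighted sets, and therefore also pins down the deviation of the set(s) whose index is still in $I$ at the end.

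For (i): when $j'$ is removed from $I$ we are in the \textbf{else}-branch, so $w(S_{j'})+w(v')\ge w_{j'}$, while $w(S_{j'})<w_{j'}$ because $j'\in I$ just before (an invariant preserved by the \textbf{if}-branch, where $S_{j'}$ is only enlarged while $w(S_{j'})+w(v')<w_{j'}$); here $S_{j'}$ still denotes the set before the possible addition of $v'$. If $v'$ is not added, then equality $w(S_{j'})+w(v')=w_{j'}$ is impossible (it would trigger the addition), so $0<w_{j'}-w(S_{j'})<w(v')\le\wmax$; if $v'$ is added, then $0\le w(S_{j'}\cup\{v'\})-w_{j'}<w(v')\le\wmax$. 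Either way $|w_i-w(S_i)|<\wmax$ for $i=j'$, exactly the computation already performed at the start of the proof of \Cref{lemma::invariant-weights}. Since the algorithm only enlarges sets with index in $I$ and line~\ref{algWeightedChordal::last-step} enlarges at most the unique set left in $I$, $S_{j'}$ is untouched from then on, so this bound survives until termination.

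For (ii): write $I,\compi=[k]\setminus I$ for the index sets at the instant the while-loop exits, i.e.~right before line~\ref{algWeightedChordal::last-step}; by the loop guard, either $|I|\le 1$ or $\bigcup_{i\in[k]}S_i=V$ already. By \Cref{lemma::weighted-terminates-conncected-partition}, the algorithm terminates with $S_1,\dots,S_k$ a partition of $V$, so for the final sets $\sum_{i\in[k]}\bigl(w_i-w(S_i)\bigr)=0$, hence $\sum_{i\in I}\bigl(w_i-w(S_i)\bigr)=-\sum_{i\in\compi}\bigl(w_i-w(S_i)\bigr)$. If $|I|=1$, say $I=\{j\}$, line~\ref{algWeightedChordal::last-step} moves all still-unassigned vertices into $S_j$ and leaves the weights of the sets indexed by $\compi$ unchanged, so $w_j-w(S_j)=-\sum_{i\in\compi}(w_i-w(S_i))$, which has absolute value $<\wmax$ by \Cref{lemma::invariant-weights}. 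If $|I|\ne 1$, line~\ref{algWeightedChordal::last-step} changes nothing and each $i\in I$ still has $w(S_i)<w_i$; then $\sum_{i\in I}(w_i-w(S_i))$ is a sum of strictly positive terms of total value $<\wmax$, so each term is $<\wmax$ (and the corner case $I=\varnothing$ forces $w(S_i)=w_i$ for all $i$). Combined with (i) for the indices in $\compi$, this gives $w_i-\wmax<w(S_i)<w_i+\wmax$ for all $i\in[k]$.

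I do not anticipate a genuine obstacle here: \Cref{lemma::invariant-weights} does the heavy lifting. The only place to be careful is the bookkeeping around line~\ref{algWeightedChordal::last-step} — one must verify that dumping the leftover vertices into the last non-full set does not alter the already-fixed deviations of the full sets, and one must use the telescoped bound of \Cref{lemma::invariant-weights} on the \emph{sum} over $\compi$ rather than the much weaker $(k-1)\wmax$ one would get from bounding the $|w_i-w(S_i)|$ individually and adding them up.
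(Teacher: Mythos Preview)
Your proof is correct and follows essentially the same approach as the paper's: first verify directly that every set declared full during the while-loop satisfies $|w_i-w(S_i)|<\wmax$ and is frozen thereafter, then use \Cref{lemma::invariant-weights} together with $\sum_{i\in[k]}(w_i-w(S_i))=0$ to bound the deviation of the sets still indexed by $I$ when the loop exits, splitting into the cases $|I|=1$ and $|I|\ne1$ exactly as the paper does. The only cosmetic omission is that you do not explicitly mention the sets that are already in $\compi$ at initialization (those with $w(t_i)=w_i$), but this is immediate and the paper also dispatches it in one line.
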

\begin{proof}
	The sets in $S_1, \dots, S_k$ with indices $[k] \setminus I$ in the initialization of $I$, i.e.~$I = \{i \in [k] \mid w(S_i) < w_i\}$, satisfy clearly its weight conditions as $w(t_i) \leq w_i$ for all $i \in [k]$.
	Next, we show that each set that is declared as full weighted set in the while-loop
	%, i.e.~the corresponding index is removed from $I$,
	satisfies its weight condition, i.e.~$|w_i - w(S_i)| < \wmax$ for $i \in \compi$.
	According to Algorithm~\ref{alg::chordal-weighted} let $j'$ be the index that we remove from $I$ and consider $S_{j'}$ before possibly adding $v'$ to it.  
	$w(S_{j'}) < w_{j'}$ and $w(S_{j'}) + w(v') \geq w_{j'}$ implies that $w(S_{j'}) < w_{j'} + \wmax$ independent of $v'$ being added to $S_{j'}$ or not by $w(v') \leq \wmax$.
	Similarly, $w(S_{j'}) + w(v') > w_{j'}$ implies that $w(S_{j'}) > w_{j'} - \wmax$.
	In case $w(S_{j'}) = w_{j'} - \wmax$ and $w(v') = \wmax$ the algorithm adds $v'$ to $S_{j'}$ (cf.~line~\ref{alg::weighted-if2}) and we have $w_{j'} - w(S_{j'} \cup \{v'\}) = 0 < \wmax$.
	
	Hence, it remains to prove that the weight conditions are satisfied from the non-full sets if either the while-loop terminates with all vertices assigned, or with $|I| = 1$.
	We start with the former case.
	If all vertices are assigned we have $\sum_{i=1}^{k}(w_i - w(S_i)) = \sum_{i=1}^{k} w_i - \sum_{i=1}^{k} w(S_i)) = w(V) - w(V) = 0$.
	Let $I$ be the indices of the non-full weighted sets after the while-loop is terminated with all vertices assigned and recall $\compi = [k] \setminus I$.
	Each non-full weighted set $S_i$ for $i \in I$ satisfies $w(S_i) < w_i$ and therefore $\sum_{i \in I} w_i - w(S_i) > 0$ as each value $w_i - w(S_i)$ is greater than zero.
	Thus, by $\sum_{i=1}^{k}(w_i - w(S_i)) = \sum_{i \in \compi} (w_i - w(S_i)) + \sum_{i \in I} (w_i - w(S_i)) = 0$ we have $\sum_{i \in \compi} (w_i - w(S_i)) < 0$ and by \Cref{lemma::invariant-weights} $-\wmax < \sum_{i \in \compi} (w_i - w(S_i)) < 0$.
	Suppose there is an $i \in I$ with $w_i - w(S_i) \geq \wmax$.  
	This would imply that $0 = \sum_{i \in \compi} (w_i - w(S_i)) + \sum_{i \in I} (w_i - w(S_i)) > -\wmax + \sum_{i \in I} (w_i - w(S_i)) \geq -\wmax + \wmax = 0$, which is a contradiction.
	
	It remains to consider the case that the while-loop terminates when $|I| = 1$.
	Let say $I = \{\ell\}$ and the remaining unassigned vertices are already added to $S_\ell$ according to line \ref{algWeightedChordal::last-step}.
	Same as above, the $S_i$'s with $i \in [k] \setminus \{\ell\} = \compi$ satisfy its required weight condition and hence, we need to show that $w_\ell - \wmax < w(S_\ell) < w_\ell + \wmax$ holds.
	By $\sum_{i \in \compi} (w_i - w(S_i)) + \sum_{i \in I} (w_i - w(S_i)) = \sum_{i \in \compi} (w_i - w(S_i)) + (w_\ell - w(S_\ell)) = 0$ we obtain $\sum_{i \in \compi} (w_i - w(S_i) = w(S_\ell) - w_\ell$.
	As a result, since $|\sum_{i \in \compi} (w_i - w(S_i)| < \wmax$ by \Cref{lemma::invariant-weights}, it follows that $|w(S_\ell) - w_\ell| < \wmax$.
\end{proof}

\section{Omitted Proofs}\label{appendix2}
\subsection*{Proof of Lemma~\ref{lemma::ordering_induced_path}}
\begin{proof}
	Assume for a contradiction that there is an index $j\in\{2,\ldots,k-1\}$ such that $\sigma(v_j)=\min_{i\in[k]}\{\sigma(v_i)\}$.
	As a result $\sigma(v_{j-1})>\sigma(v_j)$ and $\sigma(v_j)<\sigma(v_{j+1})$.
	This means however when $v_j$ is assigned a $\sigma$ value, none of $v_{j-1}$ and $v_{j+1}$ have been assigned such a value. 
	Since $v_j$ is a simplicial vertex at that point, its neighbors that still have not been assigned a $\sigma$ value induce a clique.
	Hence, $v_{j-1}v_{j+1}\in E(G)$, which contradicts the fact that $P$ induces a simple path.
\end{proof}

%\subsection*{Proof of Lemma~\ref{lemma::running-time1}}
%\begin{lemma}
%	\label{lemma::running-time1}
%	Algorithm~\ref{alg::chordal} runs in time $\bigO(|V|^2)$.
%\end{lemma}
%\begin{proof}
%	In each iteration of the while-loop we increase $|\bigcup_{i \in [k]} S_i|$, that is, the while-loop iterates not more than $|V|$ times.
%	Each operation in the while-loop runs in time $\bigO(|V|)$, which results in a total running time of $\bigO({|V|}^2)$.
%\end{proof}

\subsection*{Proof of Lemma~\ref{lemma:observation}}
\begin{proof}
	Let $v\in V\setminus C$  be adjacent to two vertices $u$ and $w$ in $C$. 
	If $uw\in E(G)$ then, since $G$ is house-free, $v$ is also adjacent to at least one vertex of $C$ other than $u$ and $w$. 
	Then however, if $v$ is only adjacent to three vertices of $C$, it induces a $C_4$ with two of the adjacent and the non adjacent vertex, that shares 3 vertices with $C$.
	In the case where $uw\not\in E(G)$, in order to not have two induced $C_4$ sharing three vertices, $v$ has to also be adjacent to another vertex of $C$, which leads us to the previous case where $v$ is adjacent to two vertices of $C$ inducing an edge and hence universal to $C$.
	
	Moreover notice that any set of universal vertices to $C$, induces a clique since otherwise two induced $C_4$ exist that share two vertices (consider $C$ and the $C_4$ induced by two non adjacent vertices that are universal to $C$  and two non adjacent vertices of $C$).
\end{proof}

\subsection*{Proof of Lemma~\ref{lemma:doublehouse}}
\begin{proof}
	Let $G$ be  an \ourclass graph and $H$ be a subgraph of $G$ that is a double house.	As illustrated in Figure~\ref{fig::prooffigs}, we denote by $C^1$ and $C^2$ the two induced $C_4$ of $H$, by $u_1$ their common vertex, by $u_2$ and $u_3$ the ones adjacent to $u_1$ and to each other, belonging in $C^2$ and $C^1$, respectively and finally by $u_{ij}$ the vertex of $C_i$ that is adjacent to $u_j$.
	
	Notice that by Lemma~\ref{lemma:observation}, since $u_2$ is adjacent to two vertices of $C^1$ it is universal to $C^1$, while the same holds for $u_3$ and $C^2$.
	After adding those edges to $H$ however $u_{21}$ is adjacent to two of the vertices of $C^1$, and hence is universal to $C^1$ in $G$, while the same holds for $u_{11}$ and $C^2$.
	After this addition however $u_{13}$ is adjacent to both $u_2$ and $u_{21}$ of $C^2$ and hence universal to $C^2$, which creates the chord $u_{13}u_1$ in $C^1$ that concludes this proof.
\end{proof}

\subsection*{Proof of Lemma~\ref{lemma:add-chord-to-cycle}}
\begin{proof}
	Assume for a contradiction that the addition of $v_1v_3$ creates a new induced $C_4$, $C'=\{v_1',v_1,v_3,v_3'\}$, where $v_1'$, and $v_3'$ are the neighbors on $C'$ of $v_1$ and $v_3$ respectively.
	Notice that $v_1'$ and $v_3'$ do not belong in $C$.
	In order for $v_1,v_2,v_3,v_3',v_1'$ not to induce a hole in $G$, either $v_1'$ or $v_3'$, say $v_1'$, has to also be adjacent to either $v_2$ or $v_3$. 
	This however, due to \Cref{lemma:observation}, means that $v_1'$ is universal to $C$ in $G$, which creates a chord on $C'$.
	
	It remains now to show that $G'$ is still $\text{HHI}_4^2$-free.
	Using similar arguments as before we see that no hole is formed from the addition of $v_1v_3$ and since no new $C_4$ is formed also, the remaining $C_4$ keep having pairwise at most one vertex in common.
	Assume now for a contradiction that adding  $v_1v_3$  creates an induced house $H$ in $G'$.
	Since, as we showed above, $v_1v_3$ does not participate in any induced $C_4$ it must be one of the two roof's edges.
	Notice also that from the vertices of $C$, only $v_1$ and $v_3$ participate in this house because otherwise $G$ would contain two induced $C_4$ sharing two vertices.
	Let $u$ be the third vertex of the roof and notice that by \Cref{lemma:observation}, $u$ is also adjacent to $v_2$ and $v_4$, and let $w$ and $z$ be the remaining two vertices of the house (assume that $wu, zv_3\in E(G)$).
	In order for $wzv_2v_3u$ not to induce a house in $G$ either $v_2z\in E(G)$ or $v_2w\in E(G)$.
	Notice that through these cases we conclude, again by \Cref{lemma:observation},  that either $w$ or $z$ is universal to $C$. 
	We have assumed however that $H$ is an induced house, hence $w$ is not universal to $C$ because that would create a chord in the body of the house.
	As a result $z$ is universal to $C$, which again leads to a contradiction to the fact that $H$ is an induced house because of the edge $v_1z$.
\end{proof}

\subsection*{Proof of Lemma~\ref{lemma:closed-under-contraction}}
\begin{proof}
	As we have stated before \ourclass graphs are in particular HHD-free. Since HHD-free graphs are closed under edge contraction, no hole or house occurs after contracting any edge of an \ourclass graph.	
	
	Let $C=\{v_1,v_2,v_3,v_4\}$ be an induced $C_4$ in $G$ and consider contracting the edge  $v_1v_2$. Let $G'$ be the graph resulting from this contraction, and let $v_{12}$ be the newly added vertex.
	
	We first show that contracting  $v_1v_2$ does not create any new $C_4$.
	Assume for a contradiction that $G'$ contains a new induced $C_4$, $C'=\{v_{12},u_1,u_2,u_3\}$.
	\begin{itemize}
		\item
		$v_3\not\in C'$
		
		Let $u_1,u_2$ be the neighbors of $v_{12}$ on $C'$.
		Since $C'$ is new we have that $u_1v_1,u_2v_2\in E(G)$ and $u_1v_2,u_2v_1\not\in E(G)$.
		In order for $u_3,u_2,v_2,v_1,u_1$ to not induce a hole in $G$, at least one of $u_3v_1, u_3v_2, u_1u_2$ has to also exists as edges in $G$.
		This however would create a chord in $C'$  which contradicts our assumption that $C'$ is an induced $C_4$ in $G'$.
		\item
		$v_3\in C'$
		
		In order for $v_{12}v_3$ not to be a chord in $C'$, this edge participates in the induced $C_4$, and also $u_1,u_3\neq v_2,v_4$ (assuming $v_3=u_2$ and that $u_3v_3,u_1v_{12}\in E(G')$).
		Since $C'$ is new and $v_2v_3\in E(G)$ we conclude that $v_1u_1\in E(G)$ and $v_2u_1\not\in E(G)$.
		Notice now that in order for $v_1,u_1,v_2,v_3,u_3$ not to induce a hole in $G$, $u_3$ should be adjacent in $G$ to either $v_1$ or $v_2$.
		This however would create a chord on $C'$ in $G'$ which leads to a contradiction.
	\end{itemize}
	
	Now it remains to show that contracting $v_1v_2$ does not create two induced $C_4$ that share more than one vertex. 
	Assume now for a contradiction that $G'$ contains two 
	$C_4$ $Z=\{z_1,z_2,z_3,z_4\}$, $W=\{w_1,w_2,w_3,w_4\}$ that share more than one vertex. 
	
	First, assume that $Z$ and $W$ share an edge, thus let $z_1=w_1$ and $z_2=w_2$ be two of the common vertices of $Z$ and $W$. We can directly assume that $z_1=v_{12}$, and $z_3,z_4\notin W$ and $w_3,w_4\notin Z$, since previously any pair of $C_4$ shared at most one vertex. 
	By \Cref{lemma:add-chord-to-cycle}, contracting $v_1v_2$ did not create any new induced $C_4$, thus $Z$ and $W$ with the vertex $v_{12}$ replaced by either $v_1$ or $v_2$ were already induced $C_4$ in $G$, so assume that $\{v_1,w_2,w_3,w_4\}$ induces a $C_4$ in $G$. 
	
	Since $G$ was \ourclass it follows that $v_1$ is not adjacent to $z_4$ (otherwise $\{v_1,w_2,w_3,w_4\}$ and $\{v_1,z_2,z_3,z_4\}$ are two induced $C_4$ sharing more than one vertex), thus  $\{v_2,z_2,z_3,z_4\}$ also induces a $C_4$ in $G$.
	
	Then, however in order for $v_1,v_2,z_2,w_3,w_4$ not to induce a house in $G$ either $w_3$ or $w_4$ is adjacent to $v_2$, which would create a chord in $W$ after the contraction of $v_1v_2$ that leads to a contradiction.
	
	It remains to consider the case that $Z$ and $W$ share two non-adjacent vertices, i.e.~$w_1=z_1=v_{12}$ and $w_3=z_3$ are the two common vertices. Similarly to the previous case, we can use Lemma~\ref{lemma:add-chord-to-cycle} to assume that $\{v_1,w_2,w_3,w_4\}$ and $\{v_2,z_2,z_3,z_4\}$ are induced $C_4$ in $G$. Since  $v_1,v_2,z_2,z_3$ does not create an induced house or hole together with $w_2$ or $w_4$, it follows that either $v_1$ is adjacent to $z_3$, which would create a chord for $Z$ in $G'$, or  $w_2$ and $w_4$ are adjacent to $v_1$ or $z_2$. In the latter case, $w_2$ and $w_4$ are adjacent to more than one vertex of the induced $C_4$ $Z$, which means they are both universal to $Z$ and have to form a clique by Lemma~\ref{lemma:observation}. Then however $w_2w_4$ is a chord for $W$ in $G'$. 
	%Note that we reach this contradiction assuming that $z_1$ and $z_2$ are adjacent after the contraction. 
	%If we assume that they are not, it is easy to again reach a contradiction to the existence of two $C_4$ sharing two vertices in $G$ using similar arguments.
\end{proof}

\subsection*{Proof of Lemma~\ref{lemma:contraction-connectivity}}

\begin{proof}
	Assume for a contradiction that $G'$ is only $k-1$-connected, thus there is a separator of size $k-1$ that disconnects two distinct vertices $u$ and $w$ in $G'$.
	
	Since the connectivity between  $u$ and $w$  dropped  after contracting $v_1v_2$, in $G$ there are two internally vertex disjoint paths $P_1$ and $P_2$ that connect $u$ and $w$ such that $v_1\in P_1$ and $v_2\in P_2$.
	We can further assume that $P_1$ and $P_2$ are two such paths of minimal length, respectively.
	Denote also by $u_1$ and by $w_1$ the neighbors of $v_1$ on $P_1$ which are closer to $u$ and $w$, respectively. (See the right illustration in Figure~\ref{fig::prooffigs} for an example of these namings, keeping in mind that it could be that $u_1=u_2=u$ and/or $w_1=w_2=w$.)
	Similarly, denote by $u_2$ and $w_2$ these neighbors of $v_2$ on $P_2$.
	
	In order for $v_1$, $v_2$ and $u$ not to be part of an induced hole (given that $u_1$ and $u_2$ are not $u$) either $u_1u_2\in E(G)$ or $u_1v_2\in E(G)$ or $u_2v_1\in E(G)$.
	The first case, however, if no other edges existed, would create a domino while the other cases form a house.
	Hence at least one of the edges $u_1v_3$, $u_1v_4$, $u_2v_3$, $u_2v_4$ exists.
	Similarly, assuming that neither $w_1$ nor $w_2$ is equal to $w$, since our graph is \ourclass, also one of $w_1v_3$, $w_1v_4$, $w_2v_3$, $w_2v_4$ exists.
	We show now that by using the remaining vertices of $C$ we can recreate the two vertex disjoint paths that previously existed in $G$.
	\begin{enumerate}
		\item $u_1v_3\in E(G)$
		\begin{enumerate}
			\item $w_1v_3\in E(G)$:
			Notice that the two $u-w$ vertex disjoint paths are preserved in $G'$, specifically $P_1'=P_1^{[u,u_1]}v_3P_1^{[w_1,w]}$ and $P_2'=P_2$ (where in $P_2$ we have instead of $v_2$ the newly created vertex $v_{12}$).
			
			\item $w_1v_4\in E(G)$:
			$P_1'=P_1^{[u,u_1]}v_3v_4P_1^{[w_1,w]}$, $P_2'=P_2$
			
			\item $w_2v_3\in E(G)$:
			$P_1'=P_1^{[u,u_1]}v_3w_2P_2^{[w_2,w]}$, $P_2'=P_2^{[u,v_{12}]}P_1^{[v_{12},w]}$
			
			\item $w_2v_4\in E(G)$:
			$P_1'=P_1^{[u,u_1]}v_3v_4w_2P_2^{[w_2,w]}$, $P_2'=P_2^{[u,v_{12}]}P_1^{[v_{12},w]}$
		\end{enumerate}
		\item $u_1v_4\in E(G)$
		\begin{enumerate}
			\item $w_1v_3\in E(G)$:
			$P_1'=P_1^{[u,u_1]}v_4v_3P_1^{[w_1,w]}$ and $P_2'=P_2$ 
			
			\item $w_1v_4\in E(G)$:
			$P_1'=P_1^{[u,u_1]}v_4P_1^{[w_1,w]}$, $P_2'=P_2$
			
			\item $w_2v_3\in E(G)$:
			$P_1'=P_1^{[u,u_1]}v_4v_3w_2P_2^{[w_2,w]}$, $P_2'=P_2^{[u,v_{12}]}P_1^{[v_{12},w]}$
			
			\item $w_2v_4\in E(G)$:
			$P_1'=P_1^{[u,u_1]}v_4w_2P_2^{[w_2,w]}$, $P_2'=P_2^{[u,v_{12}]}P_1^{[v_{12},w]}$
		\end{enumerate}
		\item The remaining cases are symmetrical to the ones written above.
	\end{enumerate}
	Similar arguments can  be used to find such paths if $u_1=u_2=u$ or $w_1=w_2=w$.
	
	Due to \Cref{lemma:3-vertex-separator} we know that $v_3$ and $v_4$ are free to be used for the creation of the above paths since they can not be part of the same minimal separator as $v_1$ and $v_2$.
\end{proof}

\subsection*{Proof of Lemma~\ref{lemma:C4-tree}}
\begin{proof}
	Observe that by \Cref{lemma:add-chord-to-cycle} we can safely add the edges in line 1, in the sense that we still have an \ourclass graph, and that we do not create new $C_4$ that our terminal vertices might participate in. Thus, the graph we consider moving forward in the algorithm is \ourclass and the induced $C_4$ considered in line 2 do not contain two non-adjacent terminals. 
	
	Consider now the tree $T$ constructed by $\mathcal C$ in \Cref{lemma:existence of three vertices}. We proceed in arguing that the contraction order described in line 6 of Algorithm~\ref{alg:: ourclassGL} is indeed the desired one.
	%	Consider now the following procedure to decide the set of edges to contract.
	
	%	First check all the induced $C_4$ and for each such cycle containing two terminals that are non adjacent, add an edge connecting them and remove this $C_4$ from the list of the cycles represented in $T$. Then construct $T$ as described above and proceed in repeatedly performing the following steps.
	
	Let $v$ be a leaf of $T$. If $v\in S$ (representing a single vertex in $G$) then delete $v$ and update $T$ accordingly.
	If $v\in B$ then consider the induced $C_4$ corresponding to $v$. 
	The fact that $v$ is a leaf in $T$ means that there are at least three vertices in the corresponding $C_4$ that are not included in any other induced $C_4$ of $G$. These three vertices $v_1,v_2,v_3$ are candidates for line 6 of Algorithm~\ref{alg:: ourclassGL}, and after removal of them and the  $C_4$ corresponding to $v$ in $G'$ in line~8, removing $v$ from $T$ yields a tree for which this procedure can be repeated.
	%	Choose an edge that has at least one non-terminal endpoint, add it to the set of edges to be contracted and remove $v$ from $T$.
	%	We know that such an edge exists because if two non-adjacent vertices belonging in the same induced $C_4$ are both terminals then during the preprocessing an edge connecting them would have been added and as a result this $C_4$ would contain a chord.
	%	Repeat this procedure until $V(T)=\emptyset$ and then contract the edges belonging in the corresponding set.
	
	Since we remove all vertices involved in a contracted edge from $G'$ as soon as their first adjacent edge is chosen, we can ensure that no vertex is contracted twice. Thus when we create $G''$ all vertices indeed have maximum weight at most~2.
	%	Due to the fact that we choose an edge that contains an endpoint shared by two $C_4$ to be contracted only if all but one of these $C_4$ have become triangles through the contractions involving different vertices, we ensure that each original vertex is contracted with at most one other vertex.  $w_{max}=2$.
\end{proof}

%\begin{thebibliography}{8}
%\bibitem{ref_article1}
%Author, F.: Article title. Journal \textbf{2}(5), 99--110 (2016)

%\bibitem{ref_lncs1}
%Author, F., Author, S.: Title of a proceedings paper. In: Editor,
%F., Editor, S. (eds.) CONFERENCE 2016, LNCS, vol. 9999, pp. 1--13.
%Springer, Heidelberg (2016). \doi{10.10007/1234567890}

%\bibitem{ref_book1}
%Author, F., Author, S., Author, T.: Book title. 2nd edn. Publisher,
%Location (1999)

%\bibitem{ref_proc1}
%Author, A.-B.: Contribution title. In: 9th International Proceedings
%on Proceedings, pp. 1--2. Publisher, Location (2010)

%\bibitem{ref_url1}
%LNCS Homepage, \url{http://www.springer.com/lncs}. Last accessed 4
%Oct 2017
%\end{thebibliography}

\end{document}